\renewcommand{\R}{\mathbb{R}}
\newcommand{\RR}{\mathcal{R}}
\newcommand{\BB}{\mathcal{B}}
\newcommand{\oh}{\mathrm{oh}}
\newcommand{\opt}{\mathrm{opt}}
\newcommand{\holes}{\mathrm{holes}}
\newcommand{\FILL}{\textsc{Fill}}
\newcommand{\mysum}[3]{\;\smashoperator{\sum_{#1}^{#2}}\;{#3}\;}
\newtheorem{theorem}{Theorem}
\newtheorem{lemma}[theorem]{Lemma}
\newtheorem{definition}[theorem]{Definition}
\newtheorem{cor}[theorem]{Corollary}
\title{Approximability of (Simultaneous) Class Cover for Boxes}
\author{Jean Cardinal \thanks{Department of Computer Science, Université libre de Bruxelles, {\tt jcardin@ulb.ac.be}}
\and Justin Dallant \thanks{Department of Computer Science, Université libre de Bruxelles, {\tt justin.dallant@ulb.be} Supported by the French Community of Belgium via the funding of a FRIA grant.}
\and John Iacono \thanks{Department of Computer Science, Université libre de Bruxelles, {\tt john@johniacono.com} Supported by the Fonds de la Recherche Scientifique-FNRS under Grant no MISU F 6001 1.}
}
\date{}
\begin{document}
\thispagestyle{empty}
\maketitle

\begin{abstract}
Bereg et al.\ (2012) introduced the Boxes Class Cover problem, which has its roots in classification and clustering applications: Given a set of $n$ points in the plane, each colored red or blue, find the smallest cardinality set of axis-aligned boxes whose union covers the red points without covering any blue point. In this paper we give an alternative proof of $\APX$-hardness for this problem, which also yields an explicit lower bound on its approximability. Our proof also directly applies when restricted to sets of points in general position and to the case where so-called half-strips are considered instead of boxes, which is a new result. 

We also introduce a symmetric variant of this problem, which we call Simultaneous Boxes Class Cover and can be stated as follows: Given a set $S$ of $n$ points in the plane, each colored red or blue, find the smallest cardinality set of axis-aligned boxes which together cover $S$ such that all boxes cover only points of the same color and no box covering a red point intersects a box covering a blue point. We show that this problem is also $\APX$-hard and give a polynomial-time constant-factor approximation algorithm. 
\end{abstract}

\section{Introduction}
Many approaches to data mining, classification and clustering tasks show a close proximity to computational geometry, often embedding data in some space and reducing or formalizing the considered problem as a geometric one. Some well-known approaches include support vector machines, nearest neighbour classifiers or k-means clustering. In this paper we study the computational hardness and approximability of two variants of another geometric problem which has its roots in classification and clustering, known as the Class Cover problem. It can be stated as follows: given a set of $n$ points, each colored red or blue, find the smallest number of balls centered at red points which cover all red points without covering any blue points (see Section 2 for the exact definition of ``cover'' we use here). This problem was introduced by Cannon and Cowen \cite{Cannon2004}, motivated by connections to the measure of separability between two classes of points defined by Cowen and Priebe \cite{Cowen1997} as well as applications to classification and data reduction. In this paper they showed that the problem was $\NP$-hard but admitted a polynomial-time $(1+\ln(n))$-factor approximation. In the Euclidean setting for constant dimension they show that the problem even admits a polynomial-time approximation scheme ($\PTAS$). The problem was studied by others in the context of applications to pattern recognition \cite{Devinney2003} and classifiers \cite{Priebe2003}.

Here we study a variant introduced by Bereg et al.\ \cite{Bereg2012} which can be formulated as follows:
\begin{definition}[Boxes Class Cover (BCC)]
Given a set of $n$ points in the plane, each colored red or blue, find the smallest cardinality set of axis-aligned boxes which together cover the red points without covering any blue point.
\end{definition}
The authors show that this variant is $\NP$-hard and give a polynomial-time algorithm which achieves a $O(1+\log c)$-approximation, where $c$ is the size of an optimal cover. They also study a few restricted cases, among those covering with squares or so-called half-strips, where they show that they remain $\NP$-hard but admit a $O(1)$-factor approximation in polynomial time. The variant for squares was later shown to admit a $\PTAS$ \cite{Aschner2013} while for the general BCC problem it was shown by Shanjani \cite{Shanjani2020} that no $\PTAS$ can exist unless $\P=\NP$ (in this paper the author also notes that the original reduction used by Bereg et al.\ in fact already shows this result).

Note that in all the variations mentioned above, there is a clear asymmetry between the roles of the two color classes, which is not always warranted in applications. As a measure of separability of two classes this can also lead to a phenomenon where the first class can be ``separated'' from the second using few boxes, while separating the second from the first requires many boxes. This motivates us to consider a symmetric version of the problem, which we formulate as follows:
\begin{definition}[Simultaneous BCC (SBCC)]
Given a set $S$ of $n$ points in the plane, each colored red or blue, find the smallest cardinality set of axis-aligned boxes which together cover $S$ such that all boxes cover only points of the same color and no box covering a red point intersects the interior of a box covering a blue point.
\end{definition}

In Section 2 we go over some basic definitions and lemmas. In Section 3 we give an alternative proof to the fact that there is no $\PTAS$ for BCC unless $\P=\NP$, by a reduction from Vertex Cover. While this is already known, our proof is more direct than previous ones and allows us to exhibit a specific lower bound on the approximation factor. The proof also works for half-strips and for the restriction where we consider only point sets in general position, which are new results. In Section 4 we show that for the SBCC problem there is again no $\PTAS$ unless $\P=\NP$. Finally in Section 5 we give a polynomial-time constant-factor approximation algorithm for SBCC. In the way to doing so we show that requiring all boxes to be independent in the SBCC blows up the size of an optimal solution by at most a factor of $9$.

\section{Some definitions and lemmas}

We start by giving some definitions of the object we will consider in this paper. Note that we sometimes use the name of a problem (such as ``BCC'') to denote the corresponding structure we want to minimize (in case of a BCC, this would be a set of axis-aligned boxes covering the red points without covering any blue point).

\begin{definition}
A \emph{bichromatic} set of points is a set of points where some points are said to be red and the rest are said to be blue.
\end{definition}
We will use the notation $S=R\cup B$ to denote such a set of points, where $R$ is the set of red points, $B$ is the set of blue points, and $R\cap B = \emptyset$.
In what follows and in the rest of this paper, we will consider only closed axis-aligned boxes with non-empty interiors and unless otherwise specified, they are bounded. 

We use the following slightly technical definition of a rectilinear polygon, which will make some of the results and proofs easier to phrase:
\begin{definition}
A \emph{rectilinear polygon} $P$ is a connected set of points in the plane such that:
\begin{itemize}
    \item the closure of $P$ can be obtained as the union of a finite number of axis-aligned boxes $\BB$,
    \item every pair of boxes in $\BB$ is either disjoint or intersects in more than one point,
    \item and all connected components of $\R^2\setminus P$ have non-empty interior.
\end{itemize}
\end{definition}

Note that from this definition a rectilinear polygon is not necessarily closed (nor open).
We extend the usual definitions of vertices and edges of a polygon to rectilinear polygons. Note that such a polygon may or may not have holes (bounded connected components of $\R^2\setminus P$) which also contribute vertices and edges. 

We add a few more basic definitions:
\begin{definition}We call \emph{outer-hull} of $P$, denoted as $\oh(P)$, the union of all edges adjacent to the unbounded connected component of $\R^2\setminus P$. The \emph{outer-complexity} of $P$ is the number of vertices in $\oh(P)$. A \emph{convex vertex} of $P$ is a vertex such that the right angle formed by the two edges adjacent to the vertex is directed towards the interior of $P$. Otherwise we call the vertex \emph{reflex}.
\end{definition}
For a polygon $P$, we let $|P|$ denote the number of vertices in $P$. The same applies for a union of polygons or the outer-hull of a (union of) polygon(s). In all other cases we let $|S|$ denote the cardinality of the set $S$ (in such cases $S$ will always be finite).

\begin{definition}
Let $S=R\cup B$ be a bichromatic set of points. We say that an axis-aligned box $K$ \emph{covers a point} $p$ if $p$ is in the interior of $K$. We say that $K$ is red (resp.\ blue) if it covers only red (resp.\ blue) points of $S$. It is \emph{monochromatic} if it is either red or blue.

We say that a set $Z$ of axis-aligned boxes is a \emph{BCC} of $S$ if all red points in $S$ are covered by some box in $Z$ and all boxes in $Z$ are red. We say that $Z$ is an \emph{SBCC} $S$ if all points in $S$ are covered by some box in $Z$, all boxes in $Z$ are monochromatic and no two boxes of different color have their interiors intersecting.

We say that $Z$ \emph{covers a rectilinear polygon} $P$ if $P$ is included in the union of all boxes in $Z$. We say that this cover is \emph{exact} (or that $Z$ covers $P$ exactly) if the closure of $P$ is equal to the union of all boxes in $Z$.
\end{definition}

We finish this section with two lemmas which will prove useful to us.
\begin{lemma}\label{lemma:refl-conv-holes}
Let $P$ be a rectilinear polygon. Let $c$ be the number of convex vertices of $P$, $r$ be the number of reflex vertices and $h$ be the number of holes. Then we have $r=c+4(h-1)$. In particular, if $P$ is obtained as the union of $k$ axis-aligned boxes, then it has outer-complexity less than $8k$, as each box contributes at most $4$ convex vertices to the outer face.
\end{lemma}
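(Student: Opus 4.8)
The statement is the classic relation for rectilinear polygons, generalized to allow holes, plus an easy corollary. Let me sketch how I'd prove it.

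The key identity is $r = c + 4(h-1)$ where $c, r$ are convex and reflex vertices and $h$ is the number of holes (bounded components of $\mathbb{R}^2 \setminus P$)... wait, let me reconsider. Actually if $P$ has no holes, $h = 1$? No — "number of holes" should be bounded components of the complement. For a simple rectilinear polygon with no holes, $h = 0$ gives $r = c - 4$, i.e. $c = r + 4$, which is the standard fact (a simple rectilinear polygon has exactly 4 more convex than reflex corners). So here $h$ counts holes (bounded complement components), and the unbounded component is separate. So $h=0$: $c = r+4$. Good, that's consistent. Then each hole is itself a rectilinear polygon boundary, and from the hole's perspective its convex corners (convex toward the hole) are reflex for $P$ and vice versa, so each hole contributes a net of... a simple closed rectilinear curve has (convex − reflex) $= \pm 4$ depending on orientation; for a hole, the "inside of $P$" is outside the hole, so it contributes $+4$ to reflex minus convex count of $P$, i.e. $r - c$ gets $+4$ per hole. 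Starting from $r - c = -4$ with no holes and adding $4$ per hole: $r - c = -4 + 4h = 4(h-1)$, so $r = c + 4(h-1)$.

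So the plan is as follows. First I would establish the identity for a simple rectilinear polygon (no holes): walking around the boundary, the exterior angle at a convex vertex is $+\pi/2$ and at a reflex vertex is $-\pi/2$, and the total turning is $2\pi$ for a simple closed curve traversed counterclockwise, giving $\frac{\pi}{2}(c - r) = 2\pi$, hence $c - r = 4$. I should be careful that the technical definition of rectilinear polygon given (union of boxes, pairwise disjoint or overlapping in more than a point, complement components have nonempty interior) indeed yields a well-defined polygonal boundary to which the turning-number argument applies; the conditions are exactly there to rule out degenerate "pinch points" and zero-width features. Then for a polygon with $h$ holes, I apply the turning argument to each boundary curve separately: the outer boundary contributes $c_0 - r_0 = 4$, and each hole boundary, traversed with $P$ on its left (hence clockwise as seen from the hole's interior), contributes $c_i - r_i = -4$ where $c_i, r_i$ count the vertices of that component classified as convex/reflex \emph{with respect to $P$}. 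Summing over all $h+1$ boundary components, $c - r = 4 - 4h = 4(1-h)$, i.e. $r = c + 4(h-1)$.

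For the corollary: if $P$ is the union of $k$ axis-aligned boxes, each box is a rectangle with $4$ corners, and every convex vertex of $P$ lying on the outer-hull must coincide with a corner of one of the $k$ boxes (at a convex vertex of $P$, locally $P$ looks like a single quadrant, which can only be "created" by a box corner there). Hence the number of convex vertices on $\oh(P)$ is at most $4k$. Since $\oh(P)$ is a simple closed rectilinear curve (the boundary of the unbounded complement component's closure), it has exactly $4$ more convex than reflex vertices among its own vertices, so its total vertex count is at most $2 \cdot 4k - 4 < 8k$.

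The main obstacle I anticipate is not the turning-number computation itself but the bookkeeping needed to make "vertex", "convex/reflex", "hole", and "boundary component" precise under the paper's nonstandard definition of rectilinear polygon — in particular verifying that the boundary decomposes into finitely many disjoint simple closed rectilinear curves and that the convex/reflex dichotomy is exhaustive (no straight-through or degenerate vertices), which is exactly what the three bullet conditions in the definition are designed to guarantee. Once that setup is in place, the rest is the standard argument.
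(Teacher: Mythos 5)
Your proposal is correct and follows essentially the same route as the paper: establish the classical identity $c - r = 4$ for the hole-free case (which the paper simply cites) and then account for each hole contributing four more reflex than convex vertices to $P$ via the orientation/role reversal, together with the same counting of outer-hull convex vertices by box corners for the $8k$ bound. The only difference is that you re-derive the hole-free base case by the turning-angle argument instead of citing it, which does not change the structure of the proof.
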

\begin{proof}
It is well known (see for example \cite{ORourke1983}) that for a rectilinear polygon with no holes we have $r=c-4$. Now consider some hole $H$ of $P$. If we view $H$ as a rectilinear polygon itself and denote $c_H$ and $r_H$ the number of convex and reflex vertices of $H$ respectively, then we have $r_H=c_H-4$ (as $H$ has no hole). Because a convex (resp.\ reflex) vertex for $H$ is a reflex (resp.\ convex) vertex for $P$, the hole $H$ contributes $r_H+4$ reflex vertices and $r_H$ convex vertices to $P$. As for a rectilinear polygon with no holes we have $r=c-4$ and every hole contributes $4$ more reflex vertices than convex vertices, it holds that in a a rectilinear polygon with $h$ holes we have $r=c-4 + 4h = c+4(h-1)$.
\end{proof}

\begin{lemma}[\cite{Eppstein2009}]\label{lemma:ind-rec-cover}
Any rectilinear polygon with $n$ vertices and $h$ holes can be covered exactly with at most $n/2+h-1$ axis-aligned boxes.
\end{lemma}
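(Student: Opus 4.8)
The plan is to reduce the statement to a bound on rectangular \emph{partitions} and then build such a partition by drawing cuts from reflex vertices. First I would invoke Lemma~\ref{lemma:refl-conv-holes}: writing $r$ and $c$ for the numbers of reflex and convex vertices of $P$, we have $c+r=n$ and $r=c+4(h-1)$, hence $r=n/2+2(h-1)$ and therefore $n/2+h-1=r-h+1$. Since a partition of the closure of $P$ into axis-aligned boxes with pairwise disjoint interiors is in particular an exact cover, it suffices to partition $P$ into at most $r-h+1$ boxes.

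To build the partition, I would process reflex vertices one at a time by drawing a \emph{cut}: an axis-aligned segment starting at a reflex vertex $v$ of the current subdivision, running into the interior along the support line of one of the two edges incident to $v$, and extended maximally until it meets the current boundary (an original edge or an earlier cut). A short local analysis shows that such a cut always removes $v$ and never creates a new reflex vertex: at its far endpoint it either hits another reflex vertex (which it then removes as well) or it forms a $T$-junction, whose two newly created corners are both convex; moreover it can never end at a convex vertex, since one cannot approach a convex vertex of $P$ along an edge line from the interior side. Hence after at most $r$ cuts no reflex vertex remains and every cell of the subdivision is a box.

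The counting step is where the $-h$ comes from, and I expect it to be the main obstacle. Each cut either splits one cell into two, or leaves the number of cells unchanged; the latter happens exactly when the cut joins the outer boundary of a multiply connected cell to one of its holes (or joins two of its holes), which decreases the total number of holes by one. The key point is to arrange that the $h$ hole-eliminating cuts are all of the non-splitting kind. I would do this by opening all holes first: as long as a hole exists, pick a globally topmost hole $H$, let $u$ be an endpoint of a topmost edge of $H$, and cut straight upward from $u$; since no part of $H$ (nor of any other hole, by the choice of $H$) lies above $u$, this cut must reach the outer hull, so it opens $H$ without splitting any cell while removing the reflex vertex $u$. After these $h$ cuts the polygon is hole-free, at most $r-h$ reflex vertices remain, and there is still a single cell; resolving the remaining reflex vertices takes at most $r-h$ further cuts, each of which splits a (now simply connected) cell in two. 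The final number of boxes is thus at most $1+(r-h)=r-h+1=n/2+h-1$.

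The remaining work is routine: the local case analysis at the two endpoints of a cut, and checking that the upward extension in the hole-opening step genuinely cannot re-enter $H$, including the tie-breaking when several holes, or several edges of $H$, attain the maximal height. For the sharper bound $n/2+h-1-l$ (with $l$ the maximum number of independent chords) and a matching-based proof, one can consult \cite{Eppstein2009} and the classical literature on minimum rectangular partitions of orthogonal polygons.
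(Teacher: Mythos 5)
Your proof is correct, but note that the paper does not prove this lemma at all: it is imported as a known result from \cite{Eppstein2009}, where it follows from the exact characterization of the minimum rectangular partition (of size $n/2+h-1-l$, with $l$ the maximum number of independent chords between reflex vertices) obtained via an independent-chords/bipartite-matching argument. What you supply is a self-contained elementary proof of the weaker bound that the paper actually uses, and it holds up: the reduction through Lemma~\ref{lemma:refl-conv-holes} is right ($r=n/2+2(h-1)$, so the target is a partition into at most $r-h+1$ boxes, and a partition of the closure is in particular an exact cover); the two-phase cutting scheme works because each upward cut from a top corner of a currently-topmost hole can only terminate on the outer boundary component or on an earlier cut (every remaining hole lies weakly below the starting height, so ties are harmless), hence these $h$ cuts are non-splitting while each removes at least one reflex vertex, and the at most $r-h$ subsequent cuts each split a simply connected cell, giving at most $1+(r-h)=n/2+h-1$ rectangular cells. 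The endpoint case analysis you defer is genuinely routine in this setting, since the paper's definition of a rectilinear polygon (a union of boxes, no two meeting in a single point) rules out pinch points and zero-width features, so the region just above a topmost hole corner is interior and a maximal axis-parallel cut cannot end at a convex corner. The trade-off between the two routes: the citation buys the sharper $-l$ term and an optimal algorithm, whereas your construction is short, uses only tools already proved in the paper, and suffices for every application of the lemma here.
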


\section{An approximation-preserving reduction from Minimum Vertex Cover to Minimum Class Cover}

Here we show how to reduce the (minimum) Vertex Cover problem to BCC in a way that shows the following:
\begin{theorem}\label{thm:class-cover-reduction}
If BCC can be approximated in polynomial time within a constant factor of $(1+\epsilon)$ for $\epsilon>0$ then Vertex Cover can be approximated in polynomial time within a constant factor of $(1+(d+1)\epsilon)$ on graphs with maximum degree bounded by $d$.
\end{theorem}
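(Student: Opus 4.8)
The plan is to design a gadget-based reduction that turns a graph $G=(V,E)$ of maximum degree $d$ into a bichromatic point set $S=R\cup B$ so that minimum BCCs of $S$ correspond (up to an additive term linear in $|V|$, and with a controlled scaling) to minimum vertex covers of $G$. First I would lay out $|V|$ "vertex gadgets" along a diagonal, each gadget being a small cluster of red points that can be covered by one box in exactly two "canonical" ways — a horizontal way and a vertical way — with blue points placed to forbid any single box from covering red points from two different gadgets. The two canonical ways will encode the binary choice of orienting/not selecting vertex $v$. Then, for each edge $uv\in E$, I would place a single red "edge point" $p_{uv}$ in the region where the covering boxes of gadgets $u$ and $v$ can reach, positioned so that $p_{uv}$ can be absorbed into the box covering gadget $u$ if and only if $u$ is "selected" (say, covered horizontally), and likewise for $v$; additional blue points ensure $p_{uv}$ cannot be covered cheaply by any box not associated with $u$ or $v$. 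The crucial design constraint is that covering $p_{uv}$ with its own dedicated box must cost strictly more than adjusting the incident gadgets, so an optimal solution never does this unless forced.

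With the construction in place, the core of the argument is the two-way correspondence. For the forward direction: given a vertex cover $C\subseteq V$ of size $k$, build a BCC by covering each gadget $v\in C$ in the "selected" orientation (enlarging its box to swallow all incident edge points, which is legal since $C$ is a cover so every edge point is adjacent to a selected vertex) and each gadget $v\notin C$ in the other orientation; this uses $|V|$ boxes total (plus possibly a fixed number of auxiliary boxes that the construction forces regardless), giving $\opt_{\mathrm{BCC}}(S)\le |V| + (\text{const})$ as a function of $k$ — more precisely I want $\opt_{\mathrm{BCC}}(S) = |V| + \opt_{\mathrm{VC}}(G)$ or a similarly tight affine relation, since a clean affine relation with the right slope is what ultimately controls the approximation factor. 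For the reverse direction: given any BCC $Z$, I would argue that it can be "normalized" without increasing its size so that every gadget is covered by one canonical box and every edge point is absorbed into an incident gadget box; the set of gadgets covered in the "selected" orientation then forms a vertex cover, because an edge point $p_{uv}$ absorbed into, say, $u$'s box forces $u$ to be selected. The normalization step — showing any box configuration can be massaged into this canonical form without paying extra — is where I expect to invoke Lemma~\ref{lemma:refl-conv-holes} or Lemma~\ref{lemma:ind-rec-cover} to bound complexity, and it is the step most prone to hidden cases (e.g.\ a single clever box covering an edge point plus part of a gadget in an unexpected way), so the blue "blocker" points must be chosen carefully to rule these out.

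The final step is purely arithmetic but must be done with care to get the stated constant. From an affine relation of the form $\opt_{\mathrm{BCC}}(S) = |V| + \opt_{\mathrm{VC}}(G)$, together with the elementary bound $\opt_{\mathrm{VC}}(G)\ge |E|/d \ge |V|/(d+1)$ on graphs of max degree $d$ (using, say, that a maximum matching has size at least $|E|/d$ and $|E|\ge$ something linear in $|V|$ once we discard isolated vertices, or more directly $|V|\le (d+1)\opt_{\mathrm{VC}}(G)$ since the closed neighborhood of any minimal cover dominates $V$), a $(1+\epsilon)$-approximate BCC solution of value $\le (1+\epsilon)(|V|+\opt_{\mathrm{VC}}(G))$ translates into a vertex cover of size $\le \opt_{\mathrm{VC}}(G) + \epsilon(|V|+\opt_{\mathrm{VC}}(G)) \le \opt_{\mathrm{VC}}(G) + \epsilon\bigl((d+1)\opt_{\mathrm{VC}}(G)+\opt_{\mathrm{VC}}(G)\bigr)$, hmm — I would tune the gadget so the coefficient works out to exactly $(1+(d+1)\epsilon)$; concretely, arranging $|V| \le d\cdot\opt_{\mathrm{VC}}(G)$ via the bound $\opt_{\mathrm{VC}}\ge |V|/(d+1)$ rearranged, the recovered cover has size at most $(1+(d+1)\epsilon)\opt_{\mathrm{VC}}(G)$, and one checks the translation back from the approximate geometric solution to a combinatorial vertex cover is polynomial-time via the normalization procedure. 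The main obstacle, to reiterate, is not the arithmetic but proving the normalization lemma airtight: that no exotic box placement beats the canonical encoding, which is entirely a matter of the geometric placement of the blue points and will require a careful case analysis on how a box can intersect the gadget regions.
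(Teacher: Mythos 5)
There is a genuine gap, and it is structural rather than a matter of missing case analysis. In your gadget, covering a vertex gadget costs exactly one box whether the vertex is ``selected'' or not, and a selected box absorbs its incident edge points for free. But then the all-selected configuration is always feasible, so $\opt_{\mathrm{BCC}}(S)\le |V|+O(1)$ for \emph{every} graph, independently of $\opt_{\mathrm{VC}}(G)$ --- your own forward bound ($\le |V|+\text{const}$, not depending on $|C|$) already shows this. Consequently the affine relation you say you want, $\opt_{\mathrm{BCC}}(S)=|V|+\opt_{\mathrm{VC}}(G)$, cannot hold for this construction, and without some relation whose ``slope'' sees $\opt_{\mathrm{VC}}$ the reduction transfers no approximation guarantee at all. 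The cost of selecting a vertex must be an \emph{extra} box beyond what every instance pays. The paper arranges exactly this: each edge gadget forces one box no matter what (a middle red point alone on its own ``edge lane''), and putting a vertex in the cover corresponds to one additional box spanning that vertex's lane, giving $\opt_S=\opt_G+m$ with $m$ the number of edges; the normalization back from an arbitrary BCC is then a short direct argument (not via Lemma~\ref{lemma:refl-conv-holes} or Lemma~\ref{lemma:ind-rec-cover}, which concern rectilinear polygon covers and are not the right tools here).

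Even if you repaired the gadget so that $\opt_{\mathrm{BCC}}(S)=|V|+\opt_{\mathrm{VC}}(G)$ held, the arithmetic does not reach the stated constant. Your claimed bound $|V|\le d\cdot\opt_{\mathrm{VC}}(G)$ is false: for the star $K_{1,d}$ (or a perfect matching with $d=1$) one has $|V|=(d+1)\opt_{\mathrm{VC}}(G)$. The correct bound $|V|\le (d+1)\opt_{\mathrm{VC}}(G)$ only yields a factor $1+(d+2)\epsilon$. To get $1+(d+1)\epsilon$ the additive overhead of the reduction must be bounded by $d\cdot\opt_{\mathrm{VC}}(G)$, which is why the paper makes the overhead equal to $m$ and uses the (correct) inequality $m\le d\cdot\opt_{\mathrm{VC}}(G)$ for graphs of maximum degree $d$. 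So both the gadget accounting and the final inequality need to be reworked along these lines.
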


Using the fact that Vertex Cover is $\NP$-hard to approximate within a constant factor of $1+1/52$ on graphs with maximum degree at most $4$ \cite{Chlebik2003} we also get the following:
\begin{cor}
Approximating BCC within a constant factor of $1+1/260$ ($\approx 1.0038)$ is $\NP$-hard.
\end{cor}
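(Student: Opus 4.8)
The plan is to note first that the corollary is an immediate consequence of Theorem~\ref{thm:class-cover-reduction}: instantiating it with $d=4$, a polynomial-time $(1+\epsilon)$-approximation for BCC would give a polynomial-time $(1+5\epsilon)$-approximation for Vertex Cover on graphs of maximum degree $4$, and with $\epsilon=1/260$ this is a $(1+1/52)$-approximation, which is $\NP$-hard by \cite{Chlebik2003}; hence no polynomial-time $(1+1/260)$-approximation for BCC exists unless $\P=\NP$. The substance is therefore the reduction behind Theorem~\ref{thm:class-cover-reduction}, which I would set up so as to send a graph $G=(V,E)$ of maximum degree at most $d$, with $m:=|E|$, to a bichromatic point set $S_G$ for which $\opt(S_G)=\opt_{VC}(G)+m$ (optimum BCC size versus optimum vertex cover size) and for which any BCC of $S_G$ of size $t$ can be turned, in polynomial time, into a vertex cover of $G$ of size at most $t-m$. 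Granting these two properties the theorem is just arithmetic: a $(1+\epsilon)$-approximate BCC has size $t\le(1+\epsilon)(\opt_{VC}(G)+m)$ and so yields a vertex cover of size at most $t-m\le\opt_{VC}(G)+\epsilon(\opt_{VC}(G)+m)$; since each vertex of a cover meets at most $d$ edges, $m\le d\cdot\opt_{VC}(G)$, whence this bound is at most $(1+(d+1)\epsilon)\opt_{VC}(G)$.

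For the construction I would use one \emph{rail} per vertex and one \emph{gadget} per edge, the gadgets lying in pairwise disjoint vertical slabs. The rail of $v_i$ is the set of red ``port'' points at a height private to $v_i$, one for each edge incident to $v_i$; ``the box of $v_i$'' is the thin horizontal box covering exactly this rail, hence only red points. The gadget of an edge $e=\{v_i,v_j\}$ consists of its two ports --- one on the rail of $v_i$, one on the rail of $v_j$ --- together with one red \emph{core} point $z_e$, placed off every rail inside the slab of $e$, surrounded by blue points engineered so that every valid box is either a thin horizontal box at the height of some rail (covering only ports on that rail) or is contained in a single slab, and so that a valid box covering $z_e$ reaches at most one of $e$'s two ports. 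Completeness is then immediate: given a vertex cover $C$, take the box of each $v\in C$ together with, for each edge $e$, a box covering $z_e$ and the port of $e$ at whichever endpoint lies outside $C$ (either port if both lie in $C$); since $C$ meets $e$, its remaining port is covered by the box of an endpoint in $C$, so all red points are covered, every box is monochromatic red, and the solution has size $|C|+m$, giving $\opt(S_G)\le\opt_{VC}(G)+m$.

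The matching lower bound, and the extraction, come from analysing an arbitrary BCC $Z$, which we may assume uses only boxes that cover at least one red point. By the structural property above, each box of $Z$ is a rail box (a thin horizontal box at some rail-height) or is confined to a single slab; in particular every core $z_e$, lying off all rails, is covered by a slab-confined box, so $Z$ contains $m$ distinct such ``core boxes'', each reaching at most one of its gadget's two ports. Let $C_0$ be the set of vertices $v_i$ such that $Z$ contains a rail box covering at least one port on $v_i$'s rail. For each edge $e=\{v_i,v_j\}$, a port of $e$ not reached by its core box must be covered either by a rail box --- forcing $v_i$ or $v_j$ into $C_0$ --- or by a second slab-confined box in the slab of $e$; hence $|Z|\ge|C_0|+m+u$, where $u$ counts the edges with neither endpoint in $C_0$, and adjoining one endpoint of each such edge to $C_0$ produces a vertex cover of size at most $|Z|-m$. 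Applying this to an optimal $Z$ also gives $\opt(S_G)=\opt_{VC}(G)+m$. The step I expect to carry the most weight is establishing the structural property itself --- that the blue points can be placed so that no valid box ``leaks'' between two slabs or between a slab and a non-incident rail, and so that a box capturing a core cannot capture both ports, all while keeping the $|C|+m$ intended boxes valid --- since that is where essentially all of the casework lies. The two announced strengthenings require no new ideas: perturbing every coordinate by a generic infinitesimal puts $S_G$ in general position without altering any combinatorial statement, and replacing each box above by a three-sided half-strip, unbounded on the side pointing away from $S_G$, turns the reduction into one for the half-strip variant with an unchanged analysis.
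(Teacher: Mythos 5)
Your proposal is correct and takes essentially the same route as the paper: the corollary is obtained exactly as in the text, by instantiating Theorem~\ref{thm:class-cover-reduction} with $d=4$ and $\epsilon=1/260$ (so $(d+1)\epsilon=1/52$) against the hardness result of \cite{Chlebik2003}, and your reduction behind the theorem (vertex rails/lanes plus per-edge gadgets containing a private core point, with blue points forcing boxes to stay within one rail or one gadget) is the paper's construction up to a rotation, with the same completeness bound $\opt_S\leq\opt_G+m$ and the same final arithmetic. The only difference is cosmetic: you extract the vertex cover by a direct counting argument over core boxes, rail boxes and extra slab boxes, whereas the paper normalizes the solution by replacing/deleting boxes until each gadget contributes one covered vertex lane plus one extra box; both leave the precise blue-point placement at the level of a picture.
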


Note that as mentionned in the introduction, the reduction used by Bereg et al.\ from Rectilinear Polygon Cover (which itself admits a reduction from Vertex Cover in bounded degree graphs) also proves $\APX$-hardness. However the reduction here is more direct, which allows us to easily get the more precise $(1+(d+1)\epsilon)$ relationship and an explicit lower bound on the approximation factor. It also has the advantage that it is easy to see that it works for points in general position also (by perturbing the points slightly), whereas this is not immediately the case for the reduction from Rectilinear Polygon Cover.

\begin{figure}
    \centering
    \includegraphics[scale=0.3]{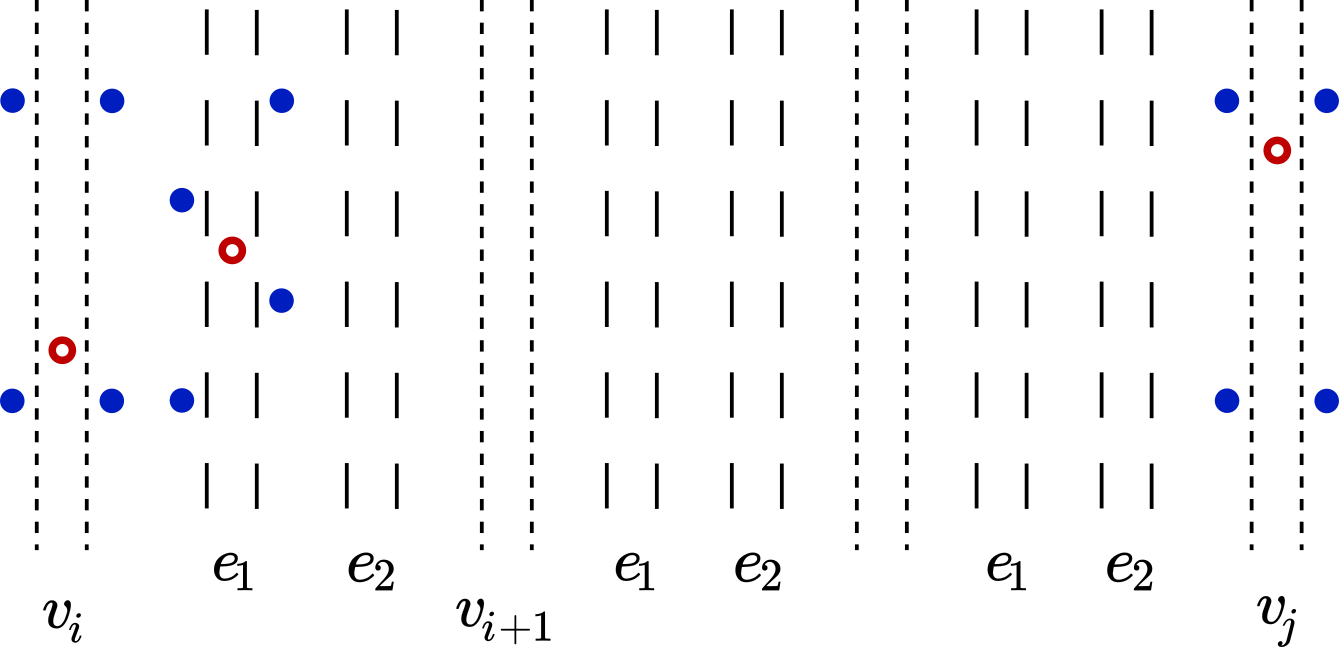}
    \caption{The gadget corresponding to an edge $e_1$ between vertices $v_i$ and $v_j$. The vertex lanes are represented with short dashes and the edge lanes with long ones. Here there are two edge lanes between consecutive vertex lanes, thus representing a graph with two edges (assuming all lanes appear on this figure).}
    \label{fig:edge-gadget}
\end{figure}


\begin{proof}[of Theorem \ref{thm:class-cover-reduction}]
Suppose we are given a simple undirected graph $G$ with $n$ vertices denoted as $v_1,v_2\ldots,v_n$ and $m$ edges denoted as $e_1,e_2\ldots,e_m$, of maximum degree at most $d$.

Imagine creating $n$ disjoint vertical slabs in the plane, one for each vertex (in the order given by the vertex indices). We call such a slab a vertex lane. Between any two consecutive vertex lanes, we create $m$ additional disjoint vertical slabs (also disjoint from the previously created vertex lanes), one for each edge. We call these the edge lanes.

Now we describe a gadget which will encode an edge of $G$ as a set of points in the plane to cover. Consider an edge $e$ between $v_i$ and $v_j$, $i<j$. We place a red point on the vertex lane corresponding to $v_i$ and one with larger $y$-coordinate on the vertex lane corresponding to $v_j$. We also add a red point on the unique edge lane corresponding to $e$ between the lanes of $v_i$ and $v_{i+1}$, with $y$-coordinate between those of the two previously placed points. Next, we add a few blue points which restrict the type of boxes which can be used to cover these red points (see Figure \ref{fig:edge-gadget}). We create an instance $S$ of the BCC problem by placing the edge gadgets such that their minimal bounding boxes are all disjoint. 

Consider the minimum vertex cover of $G$. Denote its size as $\opt_G$. For every vertex in this cover we can create a box covering the corresponding vertex lane entirely. We thus obtain a set of $\opt_G$ boxes such that in every gadget, either the left or right vertex lane is covered (or both). In each gadget we need exactly one additional box to fully cover the red points. Thus, we obtain a solution to the BCC problem of size $\opt_G + m$. In particular, if we let $\opt_{S}$ denote the minimum size of a BCC of $S$, we have $\opt_S \leq \opt_G + m$.

Now consider a solution to the BCC problem using $b$ boxes. We assume without loss of generality that no two boxes cover the exact same subset of red points. 

Consider the gadget corresponding to some edge. Notice that a red box covering the middle red point cannot cover any red point in any different gadget (as this point is the only one in its edge lane). Moreover, a box covering the left or right point can only additionally cover the middle point or other points on the same vertex lane (but not both simultaneously). 

If there is a box covering the middle point which covers neither the left nor right point, we can either delete this box (if another box also covers the middle point) or replace it with a box covering, say, both the middle and left point. This does not make any other red point uncovered. Consider this change done from now on. If the middle red point is covered twice, then we can replace the box which also covers, say, the right red point with a box covering the whole vertex lane corresponding to that right point without making any red point uncovered. If any vertex lane is fully covered multiple times after this process we can simply discard all but one of the boxes covering it.

\begin{figure}
    \centering
    \includegraphics[scale=0.33]{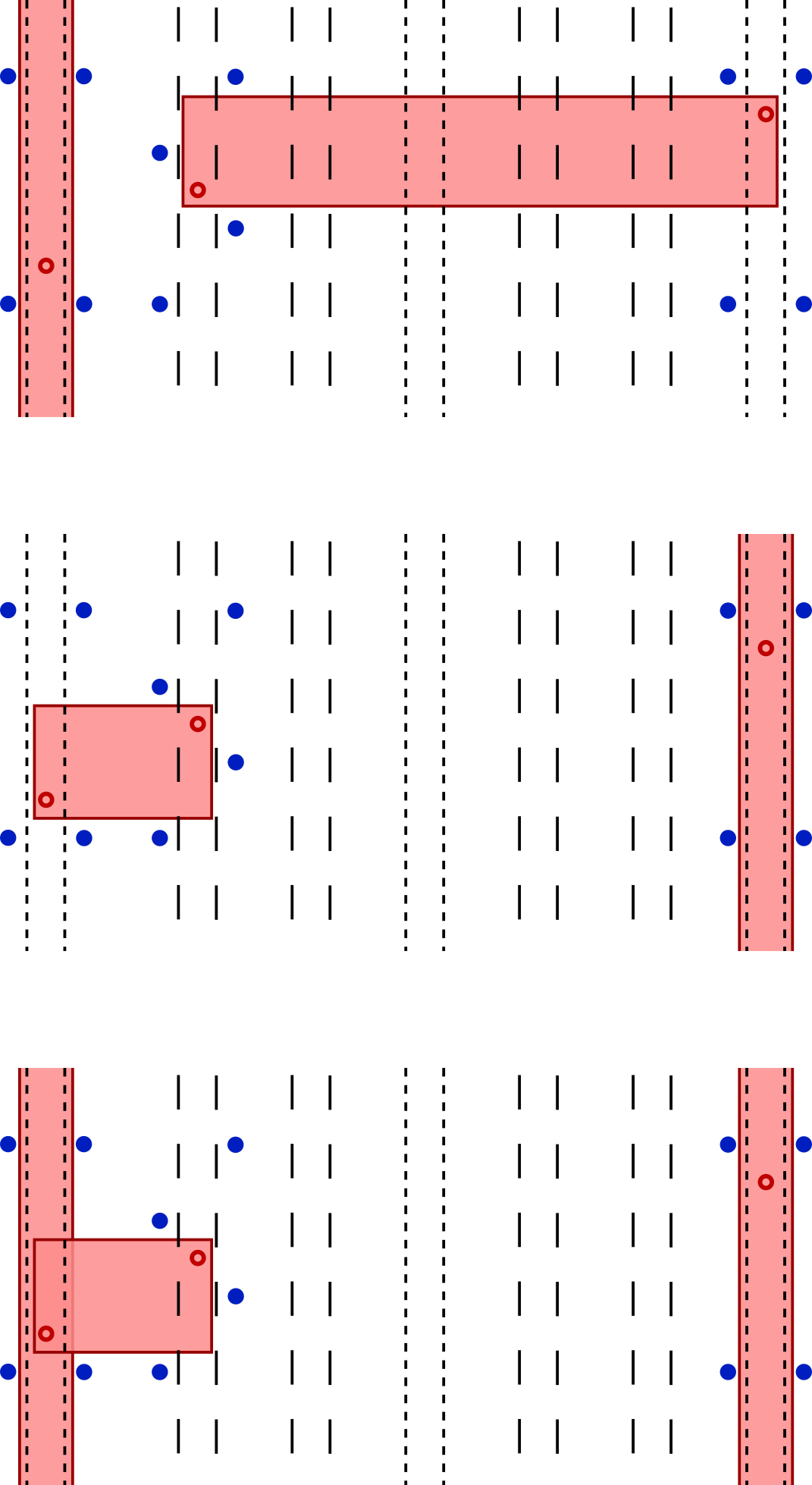}
    \caption{Regardless if the left vertex lane, right vertex lane or both are covered, there is always exactly one additional box covering the remaining red points in an edge-gadget.}
    \label{fig:covered_lanes}
\end{figure}

After making these changes, we end up with a set of at most $b$ boxes such that for any of the $m$ edge-gadgets, either the lane corresponding to the left point is fully covered or the lane corresponding to the right point is fully covered (or both). In all cases, there is one additional box covering the middle point (see Figure \ref{fig:covered_lanes}. Thus, if $\ell$ is the number of lanes covered, the total number of boxes is $\ell + m \leq b$. We can create a vertex cover of $G$ of size $\ell \leq b-m$ by choosing every vertex such that the corresponding lane is fully covered.

Now say that we can approximate BCC in polynomial time within a factor of $(1+\epsilon)$. We can run this algorithm on $S$ to obtain a class cover of size at most $(1+\epsilon)\opt_S$. By the process described above we can then obtain a vertex cover of $G$ of size at most
\begin{align*}
    (1+\epsilon)\opt_S-m &\leq (1+\epsilon)(\opt_G + m)-m \\
    &=(1+\epsilon)\opt_G + \epsilon\cdot m.
\end{align*}

Because $G$ is of maximum degree at most $d$, every vertex can cover at most $d$ edges and we have $\opt_G \geq \frac{m}{d}$, i.e.\ $m\leq d\cdot \opt_G$. Thus, the vertex cover we obtain is of size at most
\begin{align*}
    (1+\epsilon)\opt_G + \epsilon\cdot m 
    &\leq (1+\epsilon)\opt_G + \epsilon\cdot d\cdot \opt_G \\
    &= (1+(d+1)\epsilon)\opt_G.
\end{align*}

This concludes the proof.
\end{proof}

Note that this proof works exactly the same if we replace the axis-aligned boxes with axis-aligned half-strips (axis-aligned boxes which are unbounded in one of the four axis-aligned directions). By a simple perturbation argument, it also yields the same results when restricted to sets of points in general position (that is, where no three points are collinear). 

\section{An approximation hardness proof for Simultaneous Boxes Class Cover}

In this section we prove that SBCC is $\APX$-hard. Here, the proof strategy used in the previous section breaks down because of the interactions between the boxes covering the red points and those covering the blue points. Instead, we revert back to the proof strategy used by Bereg et al.\, by a reduction from Rectilinear Polygon Cover. 

\begin{definition}[Rectilinear Polygon Cover (RPC)]
Given a closed rectilinear polygon $P$, find the smallest cardinality set of axis-aligned boxes covering $P$ exactly.
\end{definition}

In our case we are covering the red and blue points simultaneously so we consider a slightly different problem, where we want to cover $P$ and its complement simultaneously.

\begin{definition}[Simultaneous RPC (SRPC)]
Given a closed rectilinear polygon $P$ and an axis-aligned box $K$ containing $P$ in its interior, find the smallest cardinality set $\RR\cup\BB$ of axis-aligned boxes such that:
\begin{itemize}
    \item the boxes in $\RR$ cover $P$ exactly,
    \item and the boxes in $\BB$ cover $K\setminus P$ exactly.
\end{itemize}
\end{definition}

We start with the following:
\begin{theorem}\label{thm:reduction-SBCC-SRPC}
If SBCC can be approximated in polynomial time within a constant factor of $(1+\epsilon)$, then SRPC can also be approximated in polynomial time within a constant factor of $(1+\epsilon)$.
\end{theorem}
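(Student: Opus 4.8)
The plan is to turn an instance $(P, K)$ of SRPC into a bichromatic point set $S = R \cup B$ for SBCC by "sampling" the two regions $P$ and $K \setminus P$ densely enough that covering the points with monochromatic non-crossing boxes is essentially the same as covering the regions exactly. Concretely, I would place a red point in the interior of each cell of the arrangement obtained by extending all edges of $P$ (and of $K$) into full lines, restricted to $P$, and a blue point in each such cell restricted to $K \setminus P$; in fact it is enough to refine this grid so that every maximal axis-aligned box contained in $P$ (resp.\ in $K\setminus P$) that one would ever want to use contains at least one sample point, while no box straddling the boundary of $P$ can be monochromatic. Since $P$ is rectilinear with polynomially many vertices, this arrangement has polynomial size, so $S$ has polynomial size and is computable in polynomial time. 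One must be slightly careful to keep the points in general enough position (or simply accept collinear points, which SBCC allows).

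The two directions of the approximation-preservation are then: first, any feasible solution $\RR \cup \BB$ of SRPC is immediately a feasible solution of SBCC on $S$ of the same size, because the boxes of $\RR$ lie inside $P$ (hence cover only red sample points), the boxes of $\BB$ lie inside $K \setminus P$ (hence cover only blue sample points), their interiors are disjoint since $\RR$ covers $P$ exactly and $\BB$ covers $K\setminus P$ exactly, and together they cover all of $K \supseteq S$. Hence $\opt_{\mathrm{SBCC}}(S) \le \opt_{\mathrm{SRPC}}(P,K)$. Conversely, given any SBCC solution $Z$ on $S$, I would argue that each box can be "snapped" to the grid lines without losing feasibility: a red box of $Z$ covers only red points, so (by the density of the sample and the non-crossing condition) it cannot cross an edge of $P$, and can therefore be shrunk/grown to the largest grid-aligned box through its sample points that still lies inside $P$; the non-crossing condition with the blue boxes is preserved because a blue box likewise lies inside $K \setminus P$. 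After snapping, the union of the red boxes is a grid-aligned region contained in $P$ that contains every red sample point, hence (by our choice of sampling density) equals $P$; symmetrically for the blue boxes and $K \setminus P$. Thus from $Z$ we extract an SRPC solution of size $|Z|$, giving $\opt_{\mathrm{SRPC}}(P,K) \le \opt_{\mathrm{SBCC}}(S)$, and in fact the snapping transformation is polynomial-time and constructive, so an approximate SBCC solution yields an equally good approximate SRPC solution. Combining the two inequalities, $\opt_{\mathrm{SBCC}}(S) = \opt_{\mathrm{SRPC}}(P,K)$, and a $(1+\epsilon)$-approximation for SBCC transfers directly.

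The main obstacle I anticipate is making the "snapping" step airtight: I need the sampling to be fine enough that (a) no monochromatic box of any solution can cross $\partial P$ — this requires a sample point on each side of every short boundary feature, so the grid must be refined near reflex vertices and thin corridors of $P$ and of its complement; and (b) a grid-aligned subregion of $P$ containing all red samples is forced to be all of $P$ — which means every cell of the line arrangement that lies in $P$ must receive a red sample, and likewise for $K\setminus P$. Once the arrangement is defined by extending all edges of both $P$ and $K$ to full lines and one sample is placed per cell, both properties hold, and the remaining work — checking that shrinking a monochromatic box to its grid hull stays inside the correct region and keeps interiors disjoint — is routine. A secondary point to handle cleanly is that SBCC allows boxes to touch but not cross; since $\RR$ and $\BB$ meet exactly along $\partial P$ with disjoint interiors, this matches the SBCC feasibility condition precisely, so no loss is incurred there.
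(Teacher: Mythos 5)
Your reverse direction rests on property (a): that with a fine enough sampling, no monochromatic box of an SBCC solution can cross $\partial P$. This property cannot be achieved by any finite point set, and in particular fails for your construction. Since ``covering'' a point means containing it in the box's \emph{interior} and you place only finitely many samples, one per open cell, any box may protrude across an edge of $P$ by a sufficiently small amount and still avoid every opposite-color sample; no refinement of the grid changes this. Worse, because each cell of $K\setminus P$ carries only a single blue point placed somewhere in its interior, a red box can even tunnel all the way through a thin corridor of the complement (say, the notch of a U-shaped $P$): it only has to miss the one blue sample in the notch cell, and it can then cover red samples on both sides of the notch. For such a box there is no grid-aligned box inside $P$ containing the red samples it covered, so your snapping step loses coverage, and the inequality $\opt_{\mathrm{SRPC}}(P,K)\le\opt_{\mathrm{SBCC}}(S)$ --- indeed possibly the equality of optima itself --- breaks down for your point set.

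The paper's construction differs in exactly the two ingredients you are missing. First, the samples are placed at \emph{all} intersections of the grid lines, including points on $\partial P$ (vertices of $P$ and points along its edges), and these boundary points are colored red; second, an extra ``midline'' is inserted between every two consecutive lines through vertices, so every gap between boundary features contains aligned rows and columns of samples. The paper does not forbid boxes from crossing $\partial P$; instead it first expands each box maximally without covering the opposite color, then replaces each red box $K_R$ by the bounding box of $K_R\cap P$ and each blue box $K_B$ by the bounding box of $K_B\setminus P$, and proves exactness as follows: a blue box covers no red point, hence contains no vertex of $P$ (vertices are red samples), so any edge of $P$ meeting its interior must span it between two opposite sides, and the axis-parallel grid line through a covered blue point then meets that edge at a grid intersection --- a red sample --- inside the box, a contradiction. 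Your one-interior-point-per-cell set supports neither step of this argument (no red points on edges or at vertices, no alignment of samples across a corridor), so the reduction needs the boundary-plus-midline sampling, or an equivalent mechanism, to go through. (A smaller, fixable point: in your forward direction a sample may lie on the common boundary of two boxes of the SRPC solution and is then covered by neither in the interior sense; the paper handles this by slightly extending the red boxes and shrinking the blue ones.)
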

The reduction here is almost identical to the previously mentionned one from RPC to BCC. We include it for the sake of completeness.

\begin{figure}
    \centering
    \includegraphics[scale=1]{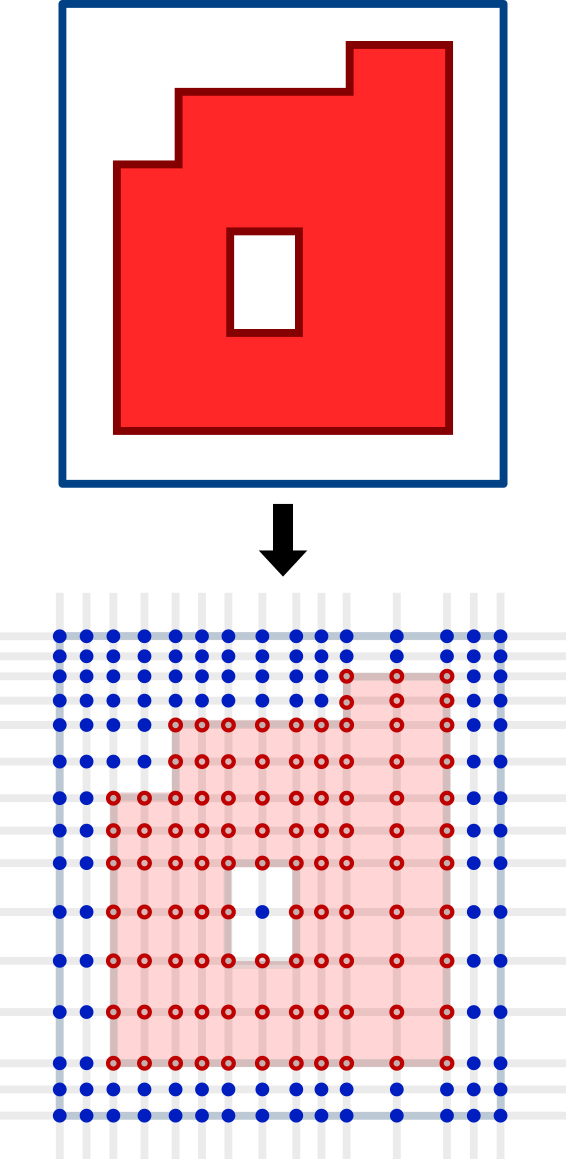}
    \caption{Reduction from SRPC to SBCC.}
    \label{fig:rectilinear_cover_reduction}
\end{figure}
\begin{proof}
We first show how to create a corresponding SBCC instance from a SRPC instance in polynomial time. Consider an axis aligned box $K$ and a closed rectilinear polygon $P$ contained in its interior (this constitutes our instance of the SRPC problem). Consider the set $L_1$ of all axis aligned lines which pass through a vertex of $P$ or $K$.
Between any two consecutive vertical (resp.\ horizontal) lines of $L_1$ draw a vertical (resp.\ horizontal) line. Call the set of newly drawn lines $L_2$. The SBCC instance we consider is the set $S=R\cup B$ of pairwise intersections between lines of $L_1\cup L_2$, colored red if they are in $P$ or on its boundary and blue otherwise. See Figure \ref{fig:rectilinear_cover_reduction} for an illustration. 

For any solution to the SRPC problem on $P$, we can slightly extend the boxes covering $P$ and shrink those covering the complement of $P$ to obtain a solution of the SBCC problem on $S$ of the same cardinality (in polynomial time). 

Let us see how we can go in the reverse direction. Consider some solution $\RR\cup\BB$ to the SBCC problem on $S$. Start by expanding the boxes in the solution in the four directions as much as possible without covering a point of the opposite color. Now for each red box $K_R$, replace it with the smallest axis-aligned box $K'_R$ containing $K_R\cap P$. For each blue box $K_B$, replace it the smallest axis-aligned box $K'_B$ containing $K_B\setminus P$. Notice that every point of $S$ which does not lie on the boundary of $P$ is still covered. Now consider some cell $c$ of the grid $L_1 \cup L_2$. By construction, one of the corner points $p$ of $c$ is the intersection of two lines in $L_1$. Any box covering this corner has been expanded to cover the whole cell (possibly excluding the edges of the cell contained in edges of $P$ with the). Thus $P$ is covered by red boxes and $K\setminus P$ is covered by blue boxes. It remains to show that these covers are exact. Suppose some blue box $K'_B$ intersects the interior of $P$. Because $K'_B$ covers at least one blue point $b$, its interior intersects $K\setminus P$ and thus also an edge $e$ of $P$. Because $K'_B$ covers no red point, no vertex of $P$ lies inside $K'_B$ and $e$ must join two opposite edges of $K'_B$. Thus there is an axis aligned line passing through $b$ and intersecting $e$ inside $K'_B$. By construction a red point lies on this intersection. This is a contradiction as $K'_B$ covers no red point. We conclude that $K'_B$ does not intersect the interior of $P$. The same reasoning shows that no red box $K'_R$ intersects $K\setminus P$.

In short, the optimal solutions for the SRPC problem on $P$ and the SBCC problem on $S$ have the same size, and any solution to the latter can be transformed into a solution to the former of the same size in polynomial time.
\end{proof}

Ideally at this point we would like to show that, say, the existence of a $\PTAS$ for SRPC implies the existence of a $\PTAS$ for RPC. Then, because a $\PTAS$ for RPC cannot exist unless $\P=\NP$, this would imply that the same holds for SRPC and thus also for SBCC. Unfortunately, the trivial reduction from RPC to SRPC is not approximation-preserving. Intuitively, an approximation to SRPC can be good overall because the optimal cover of the complement is large and well approximated while the optimal cover of the polygon itself is small and poorly approximated (thus yielding a poor solution to the RPC instance). To get around this, we focus on polygons where the size of the optimal cover of the complement is upper-bounded by a constant times the size of the optimal cover of the polygon itself.

\begin{definition}
Let $P$ be a closed rectilinear polygon and let $K$ be an-axis aligned box containing $P$ in its interior.
We say that $P$ has a \emph{$d$-small-complement} if the cardinality of the smallest exact box-cover of $K\setminus P$ is at most $d$ times the cardinality of the smallest exact box-cover of $P$.
\end{definition}

Note that the specific choice of $K$ here does not impact the definition.

\begin{lemma}
If SRPC can be approximated within a constant factor of $(1+\epsilon)$ in polynomial time, then RPC on rectilinear polygons with $d$-small-complements can be approximated within a constant factor of $(1+(d+1)\epsilon)$ in polynomial time.
\end{lemma}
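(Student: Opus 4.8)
The plan is to run the assumed $(1+\epsilon)$-approximation for SRPC on a carefully constructed instance derived from the given RPC instance, and then extract a good cover of $P$ from the returned solution. Given a rectilinear polygon $P$ with a $d$-small-complement, pick an axis-aligned box $K$ containing $P$ in its interior, so that $(P,K)$ is a valid SRPC instance. Let $\opt_P$ be the size of the smallest exact box-cover of $P$ and $\opt_{\bar P}$ the size of the smallest exact box-cover of $K\setminus P$; by the $d$-small-complement hypothesis, $\opt_{\bar P}\le d\cdot\opt_P$, and the optimum of the SRPC instance is $\opt_{\mathrm{SRPC}}=\opt_P+\opt_{\bar P}$.

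Next I would apply the SRPC approximation algorithm to $(P,K)$ in polynomial time, obtaining a solution $\RR\cup\BB$ with $|\RR|+|\BB|\le(1+\epsilon)\opt_{\mathrm{SRPC}}=(1+\epsilon)(\opt_P+\opt_{\bar P})$. The set $\RR$ alone is an exact box-cover of $P$ (this is built into the definition of SRPC), so $\RR$ is a feasible solution to the RPC instance, and it can be read off in polynomial time. It remains to bound $|\RR|$. We have
\begin{align*}
|\RR| &\le |\RR|+|\BB| - \opt_{\bar P}\\
&\le (1+\epsilon)(\opt_P+\opt_{\bar P}) - \opt_{\bar P}\\
&= (1+\epsilon)\opt_P + \epsilon\cdot\opt_{\bar P}\\
&\le (1+\epsilon)\opt_P + \epsilon\cdot d\cdot\opt_P\\
&= (1+(d+1)\epsilon)\opt_P,
\end{align*}
where the first inequality uses $|\BB|\ge\opt_{\bar P}$ (any feasible $\BB$ covers $K\setminus P$ exactly, hence has at least $\opt_{\bar P}$ boxes), and the last uses the $d$-small-complement property. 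This gives the claimed $(1+(d+1)\epsilon)$-approximation for RPC on $d$-small-complement polygons.

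I do not expect a genuine obstacle here: the argument is essentially the same bookkeeping as in the proof of Theorem~\ref{thm:class-cover-reduction}, with $\opt_{\bar P}$ playing the role that $m$ played there and the $d$-small-complement condition replacing the bounded-degree bound $m\le d\cdot\opt_G$. The one point that needs a word of care is the remark that the choice of $K$ does not affect the definition of $d$-small-complement, so that we are free to pick any convenient bounding box $K$ when setting up the SRPC instance; this should follow from the observation that enlarging $K$ changes the optimal cover of the complement only by an additive constant number of boxes, or more simply by noting that the $d$-small-complement property is assumed as a promise on the input instance and we only need it to hold for the $K$ we actually use. The rest is immediate.
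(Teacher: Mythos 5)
Your proposal is correct and follows essentially the same argument as the paper: set $\opt_{\mathrm{SRPC}}=\opt_P+\opt_{\bar P}$, run the SRPC approximation, extract the red boxes as the RPC solution, and bound its size using $|\BB|\ge\opt_{\bar P}$ together with the $d$-small-complement bound $\opt_{\bar P}\le d\cdot\opt_P$, exactly as in the paper's chain of inequalities. The remark about the choice of $K$ matches a note the paper makes separately, so there is nothing to add.
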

\begin{proof}
Let $P$ be a closed polygon with a $d$-small-complement, and let $K$ be an-axis aligned box containing $P$ in its interior. Let $\opt$ (resp.\ $\opt_\mathrm{sim}$ denote the minimum cardinality of a RPC (resp.\ SRPC) on $P$. Let $\overline{\opt}$ denote the minimum exact box-cover of $K\setminus P$. We have $\opt_\mathrm{sim} = \opt + \overline{\opt}$ and $\overline{\opt} \leq d\cdot\opt$.
Consider some solution of size $\mathrm{sol_{sim}}\leq (1+\epsilon)\opt_\mathrm{sim}$ to the SRPC on $P$. This solution $\mathrm{sol_{sim}}$ consists of a solution to the RPC problem on $P$ of size $\mathrm{sol}$  together with an exact box-cover of $K\setminus P$ of size $\overline{\mathrm{sol}}$. Finally we have 
\begin{align*}
    \mathrm{sol} &= \mathrm{sol_{sim}} - \overline{\mathrm{sol}} \\
    &\leq (1+\epsilon)\opt_\mathrm{sim} - \overline{\mathrm{sol}}\\
    &\leq (1+\epsilon)\opt + (1+\epsilon)\overline{\opt} - \overline{\mathrm{sol}}\\
    &\leq (1+\epsilon)\opt + \epsilon\overline{\opt}\\
    &\leq (1+\epsilon)\opt + \epsilon d\cdot\opt\\
    &\leq (1+(d+1)\epsilon)\opt
\end{align*}
\end{proof}

We have the following criteria to identify polygons with $d$-small-complements
\begin{lemma}
Let $\alpha\geq 0$ be a constant and let $P$ be a closed rectilinear polygon with $c$ convex vertices and $h\leq\alpha\cdot c$ holes. Then $P$ has a $(4+8\alpha)$-small-complement. 
\end{lemma}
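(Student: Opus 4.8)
The plan is to fix an axis-aligned box $K$ with $P\subset\mathrm{int}(K)$ (as noted right after the definition of $d$-small-complement, the particular $K$ is immaterial), write $\opt$ for the minimum size of an exact box-cover of $P$ and $\overline{\opt}$ for the minimum size of an exact box-cover of $K\setminus P$, and establish the two estimates $\opt\ge c/4$ and $\overline{\opt}\le c+2h$. These combine at once with the hypothesis $h\le\alpha c$:
\[
\overline{\opt}\ \le\ c+2h\ \le\ (1+2\alpha)\,c\ \le\ (1+2\alpha)\cdot 4\,\opt\ =\ (4+8\alpha)\,\opt,
\]
which is exactly the claim.

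For the lower bound: in any exact cover of $P$ every box lies inside $P$ and the boxes union to $P$. Near a convex vertex $v$ of $P$, the set $P$ locally coincides with a closed axis-aligned quarter-plane with apex $v$; since every box is convex and axis-aligned, locally at $v$ each box is empty, a half-plane, a quarter-plane, or the whole plane, and a union of such local shapes equals a quarter-plane only if at least one box has a corner exactly at $v$. As each box has only four corners, the cover uses at least $c/4$ boxes, so $\opt\ge c/4$.

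For the upper bound I would first pin down $K\setminus P$ combinatorially. Since $P$ is connected with $h$ holes, $\R^2\setminus P$ has $h$ bounded components $H_1,\dots,H_h$, each a simply connected rectilinear polygon, all contained in $\mathrm{int}(K)$; hence $K\setminus P$ has exactly $h+1$ connected components, namely the $H_i$ and one ``outer'' component $O$ bounded by $\partial K$ from outside and by the outer hull $\oh(P)$ from inside, so that $O$ has $4+|\oh(P)|$ vertices and a single hole. Covering each component exactly and taking the union yields an exact box-cover of $K\setminus P$, so Lemma~\ref{lemma:ind-rec-cover} gives
\[
\overline{\opt}\ \le\ \frac{4+|\oh(P)|}{2}\ +\ \sum_{i=1}^{h}\Bigl(\frac{n_i}{2}-1\Bigr),\qquad n_i:=|\partial H_i|.
\]
Every vertex of $P$ lies on $\oh(P)$ or on exactly one $\partial H_i$, so $|\oh(P)|+\sum_i n_i=c+r$, where $r$ is the number of reflex vertices of $P$; substituting $r=c+4(h-1)$ from Lemma~\ref{lemma:refl-conv-holes} reduces the right-hand side to $c+h$, and in particular $\overline{\opt}\le c+2h$.

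The step I expect to require the most care is precisely this structural description together with the vertex bookkeeping: correctly identifying the connected components of $K\setminus P$ and their numbers of vertices and holes, and tracking which vertices of $P$ become convex or reflex when passing to $\oh(P)$ and to the $H_i$, so that Lemma~\ref{lemma:refl-conv-holes} applies cleanly. Degenerate configurations (a hole touching $\oh(P)$, or $\partial K$ touching $P$) are excluded by the definition of a rectilinear polygon and by $P\subset\mathrm{int}(K)$, and would in any case only lower the counts. The lower bound and the concluding arithmetic are routine.
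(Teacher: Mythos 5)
Your proposal is correct and follows essentially the same route as the paper: the lower bound $\opt\ge c/4$ via convex vertices needing box corners, the decomposition of $K\setminus P$ into the holes of $P$ plus one outer region, and the combination of Lemma~\ref{lemma:refl-conv-holes} with Lemma~\ref{lemma:ind-rec-cover} to bound the complement cover (your per-component bookkeeping even gives the slightly sharper $c+h$ before relaxing to $c+2h$, which the paper obtains directly as $(n+4)/2$).
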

\begin{proof}
Let $K$ be an-axis aligned box containing $P$ in its interior. Let $r$ be the number of reflex vertices of $P$ and $n=r+c$ be the number of vertices of $P$. Let $s$ (resp.\ $\overline{s}$) denote the cardinality of the smallest exact box-cover of $P$ (resp.\ $K\setminus P$).  By Lemma \ref{lemma:refl-conv-holes}, we have $r=c+4(h-1)\leq c(1+4\alpha)-4$.
Because any box in an exact cover can cover at most 4 convex vertices of $P$, we have $s \geq c/4$. The set $K\setminus P$ is a disjoint union of rectilinear polygons with a total of $n+4$ vertices, and only one of these polygons has a hole (corresponding to $P$). By Lemma \ref{lemma:ind-rec-cover}, $K\setminus P$ can be exactly covered with at most $(n+4)/2 = (r+c+4)/2 \leq \frac{c}{4}(4+8\alpha)$ boxes. Thus $\overline{s} \leq \frac{c}{4}(4+8\alpha) \leq s(4+8\alpha)$ and the claim holds.
\end{proof}

If we chain all these implications together, we get the following:
\begin{theorem}
If there is a $\PTAS$ for SBCC, then there is a $\PTAS$ for RPC restricted to rectilinear polygons where the number of holes is at most $\alpha$ times the number of convex vertices, for any constant $\alpha \geq 0$.
\end{theorem}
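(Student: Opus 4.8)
The plan is to chain together the three results just established: Theorem~\ref{thm:reduction-SBCC-SRPC} (a $(1+\epsilon)$-approximation for SBCC yields one for SRPC), the lemma showing that a $(1+\epsilon)$-approximation for SRPC yields a $(1+(d+1)\epsilon)$-approximation for RPC on polygons with $d$-small-complements, and the lemma giving that a rectilinear polygon with $h \le \alpha c$ holes (where $c$ is the number of convex vertices) has a $(4+8\alpha)$-small-complement. Setting $d = 4+8\alpha$, this composition shows that a $(1+\epsilon)$-approximation for SBCC gives a $(1+(5+8\alpha)\epsilon)$-approximation for RPC restricted to the polygon class in question.

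First I would fix a constant $\alpha \ge 0$ and let $\mathcal{C}_\alpha$ denote the class of closed rectilinear polygons in which the number of holes is at most $\alpha$ times the number of convex vertices. By the last lemma, every polygon in $\mathcal{C}_\alpha$ has a $(4+8\alpha)$-small-complement, so RPC restricted to $\mathcal{C}_\alpha$ is a special case of ``RPC on polygons with $d$-small-complements'' for $d = 4+8\alpha$. Now suppose, for contradiction, that SBCC admits a $\PTAS$: then for every $\epsilon' > 0$ there is a polynomial-time $(1+\epsilon')$-approximation for SBCC. Given a target ratio $1+\epsilon$ for RPC on $\mathcal{C}_\alpha$, set $\epsilon' = \epsilon/(d+1)$; running the SBCC-to-SRPC reduction and then the SRPC-to-RPC reduction (both polynomial-time and approximation-preserving in the stated quantitative sense) yields a polynomial-time $(1 + (d+1)\epsilon')$-approximation $= (1+\epsilon)$-approximation for RPC on $\mathcal{C}_\alpha$. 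Since this works for all $\epsilon > 0$, we obtain a $\PTAS$ for RPC restricted to $\mathcal{C}_\alpha$, which is exactly the claim.

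There is essentially no obstacle here beyond bookkeeping: the three ingredient results are already proved, and the only thing to check is that the composition of two polynomial-time procedures is polynomial-time and that the approximation factors multiply/compose as claimed (they do, since $(1+(d+1)\epsilon') = 1+\epsilon$ when $\epsilon' = \epsilon/(d+1)$). One small point worth stating explicitly is that the reductions transform a solution back to the original instance in polynomial time, so an approximate solution to the SBCC (resp.\ SRPC) instance is effectively converted into an approximate RPC solution; this is exactly what Theorem~\ref{thm:reduction-SBCC-SRPC} and the subsequent lemma guarantee. Thus the ``hard part'', such as it is, was already done in establishing the quantitative reductions; the present statement is just their concatenation with the constant $d$ specialized to $4+8\alpha$.

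\begin{proof}
Fix a constant $\alpha \ge 0$ and let $d = 4+8\alpha$. Let $\mathcal{C}_\alpha$ be the class of closed rectilinear polygons in which the number of holes is at most $\alpha$ times the number of convex vertices. By the preceding lemma, every polygon in $\mathcal{C}_\alpha$ has a $d$-small-complement.

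Suppose SBCC admits a $\PTAS$. Fix any $\epsilon > 0$ and set $\epsilon' = \epsilon/(d+1)$. By assumption there is a polynomial-time $(1+\epsilon')$-approximation algorithm for SBCC. By Theorem~\ref{thm:reduction-SBCC-SRPC}, SRPC can then be approximated within a factor of $(1+\epsilon')$ in polynomial time. By the lemma relating SRPC and RPC, RPC on rectilinear polygons with $d$-small-complements can then be approximated within a factor of $(1+(d+1)\epsilon') = (1+\epsilon)$ in polynomial time. Since every polygon in $\mathcal{C}_\alpha$ has a $d$-small-complement, this in particular yields a polynomial-time $(1+\epsilon)$-approximation for RPC restricted to $\mathcal{C}_\alpha$.

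As $\epsilon > 0$ was arbitrary, this gives a $\PTAS$ for RPC restricted to $\mathcal{C}_\alpha$.
\end{proof}
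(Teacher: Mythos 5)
Your proof is correct and matches the paper exactly: the paper states this theorem as the immediate result of chaining Theorem~\ref{thm:reduction-SBCC-SRPC} with the two preceding lemmas, which is precisely what you do (with the explicit bookkeeping $d=4+8\alpha$, $\epsilon'=\epsilon/(d+1)$ spelled out).
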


Berman and DasGupta showed that there is no $\PTAS$ for RPC unless $\P=\NP$ \cite{Berman1997}, by reducing Vertex Cover to RPC in an approximation-preserving way. One interesting (and in our case, useful) thing to note is that the instances of RPC produced by this reduction are rectilinear polygons where every hole contributes at least one convex vertex to the polygon. In particular these instances have at least as many convex vertices as they have holes. Thus, we can state Berman and DasGupta's result in a slightly more precise way as follows:
\begin{theorem}[\cite{Berman1997}]
There is no $\PTAS$ for RPC unless $\P=\NP$, even when restricted to rectilinear polygons with no more holes than convex vertices.
\end{theorem}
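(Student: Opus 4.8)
The plan is to reuse Berman and DasGupta's reduction essentially verbatim and append a single structural observation about the polygons it produces. Recall that their construction takes a graph $G$ of bounded degree --- for which Vertex Cover is already $\APX$-hard, e.g.\ by \cite{Chlebik2003} --- and builds in polynomial time a closed rectilinear polygon $P_G$ out of one gadget per vertex and one gadget per edge of $G$, arranged so that the minimum size of an exact box-cover of $P_G$ equals an affine function $a\cdot\opt_G+b$ of the minimum vertex cover size $\opt_G$, with $a,b$ controlled by $|V(G)|$ and $|E(G)|$. Since $G$ has bounded degree, $\opt_G=\Theta(|V(G)|)=\Theta(|E(G)|)$, so the reduction is approximation-preserving and already gives the unrestricted statement: no $\PTAS$ for RPC unless $\P=\NP$. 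This part needs no change.

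The only thing to add is that every $P_G$ produced lies in the restricted class, i.e.\ has no more holes than convex vertices. The outer hull and the vertex gadgets of the construction create no holes, so each hole of $P_G$ comes from exactly one edge gadget, and it suffices to charge to each hole a convex vertex of $P_G$ in such a way that the charges are pairwise distinct. I would obtain this by inspecting the edge gadget of \cite{Berman1997}: either the boundary of the hole, viewed as a rectilinear closed curve, is non-convex, in which case it has a reflex vertex of its own --- which, by our definition, is a convex vertex of $P_G$ lying on that hole and hence not charged by any other hole --- or the holes are plain rectangles, in which case the same edge gadget must introduce a convex vertex elsewhere (on $\oh(P_G)$, say) belonging to no other gadget. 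Either way one gets an injection from holes to convex vertices, so $h\le c$; this is of course consistent with Lemma~\ref{lemma:refl-conv-holes}. Composing with the approximation-preserving property yields the stated refinement, and chaining it back through the earlier lemmas of this section then gives the intended consequence that there is no $\PTAS$ for SBCC unless $\P=\NP$.

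The main obstacle is exactly this last piece of bookkeeping. The excerpt does not reproduce the Berman--DasGupta gadgets, so one must either return to \cite{Berman1997} and track precisely where the vertices of each hole sit relative to $P_G$, verifying that the charging is well defined and injective, or --- cleaner conceptually but more work --- redesign the vertex and edge gadgets from scratch over a coarse integer grid so that every hole is by construction an $L$-shaped (or plus-shaped) region, which forces at least one reflex hole-vertex and hence at least one dedicated convex vertex of $P_G$ per hole, while re-verifying that the cover-size identity $a\cdot\opt_G+b$ still holds after the redesign. Everything else is routine.
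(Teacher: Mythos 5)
Your proposal matches the paper's approach: the paper likewise takes the Berman--DasGupta reduction as a black box for $\APX$-hardness and simply observes that the polygons it produces have every hole contributing at least one convex vertex (a reflex vertex of the hole), hence no more holes than convex vertices. The ``remaining bookkeeping'' you flag is exactly what the paper does --- it asserts this structural property of the gadgets of \cite{Berman1997} by inspection, without reproducing them --- so your argument is correct and essentially identical.
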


This combined with the previous theorem now immediately yields:

\begin{theorem}
There is no $\PTAS$ for SBCC unless $\P=\NP$.
\end{theorem}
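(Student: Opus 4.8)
The plan is to simply chain together the two immediately preceding theorems, instantiating the free parameter $\alpha$ at a convenient value. All of the substantive work has already been done in establishing the reduction SBCC $\to$ SRPC $\to$ RPC-with-$d$-small-complement and in the refined statement of Berman and DasGupta's hardness result; what remains is to notice that these two results are about exactly matching classes of polygons.

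Concretely, I would argue by contradiction. Suppose SBCC admits a $\PTAS$. Apply the theorem "If there is a $\PTAS$ for SBCC, then there is a $\PTAS$ for RPC restricted to rectilinear polygons where the number of holes is at most $\alpha$ times the number of convex vertices" with the choice $\alpha = 1$. This yields a $\PTAS$ for RPC restricted to rectilinear polygons with at most as many holes as convex vertices. But the refined version of Berman and DasGupta's theorem states that no such $\PTAS$ exists unless $\P=\NP$, since their reduction from Vertex Cover already produces only instances in this restricted class (every hole contributes at least one convex vertex). Hence $\P=\NP$, contradicting the hypothesis, so no $\PTAS$ for SBCC can exist unless $\P=\NP$.

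There is essentially no obstacle here: the only thing one must be slightly careful about is that the parameter $\alpha$ in the penultimate theorem ranges over \emph{all} constants $\alpha \ge 0$, so in particular $\alpha = 1$ is allowed, and that the restricted polygon class in that theorem ("holes $\le \alpha \cdot$ convex vertices" with $\alpha=1$) coincides with the class ("no more holes than convex vertices") for which RPC has no $\PTAS$. One should also recall from the earlier lemma chain that the constant in the resulting $\PTAS$'s approximation guarantee, namely $1+(d+1)\epsilon$ with $d = 4 + 8\alpha = 12$, is still an arbitrarily small constant as $\epsilon \to 0$, so a genuine $\PTAS$ (not merely a constant-factor approximation) is obtained, which is what the hardness result rules out.
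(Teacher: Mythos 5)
Your proposal is correct and matches the paper's argument: the paper also obtains this theorem by directly combining the preceding theorem (with the restricted polygon class) with the refined Berman--DasGupta hardness result, exactly as you do by taking $\alpha = 1$. Your extra remarks about the admissibility of $\alpha=1$ and the $1+(d+1)\epsilon$ guarantee shrinking to $1$ as $\epsilon \to 0$ are accurate and consistent with the paper's lemma chain.
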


Note that the only place where we have needed to exploit sets of points which are not in general position is in the proof of Theorem \ref{thm:reduction-SBCC-SRPC}. 

\section{A constant-factor approximation for Simultaneous Boxes Class-Cover}

Here we prove that SBCC can be approximated within a constant factor in polynomial time. To do so, we show that the minimum size of a SBCC of $S$ is bounded above and below by the minimum size of a SBCC of $S$ with interior-disjoint boxes. This latter problem can be approximated within a constant factor using the results from \cite{Mitchell1993}.

\begin{figure}
    \centering
    \includegraphics[scale=0.75]{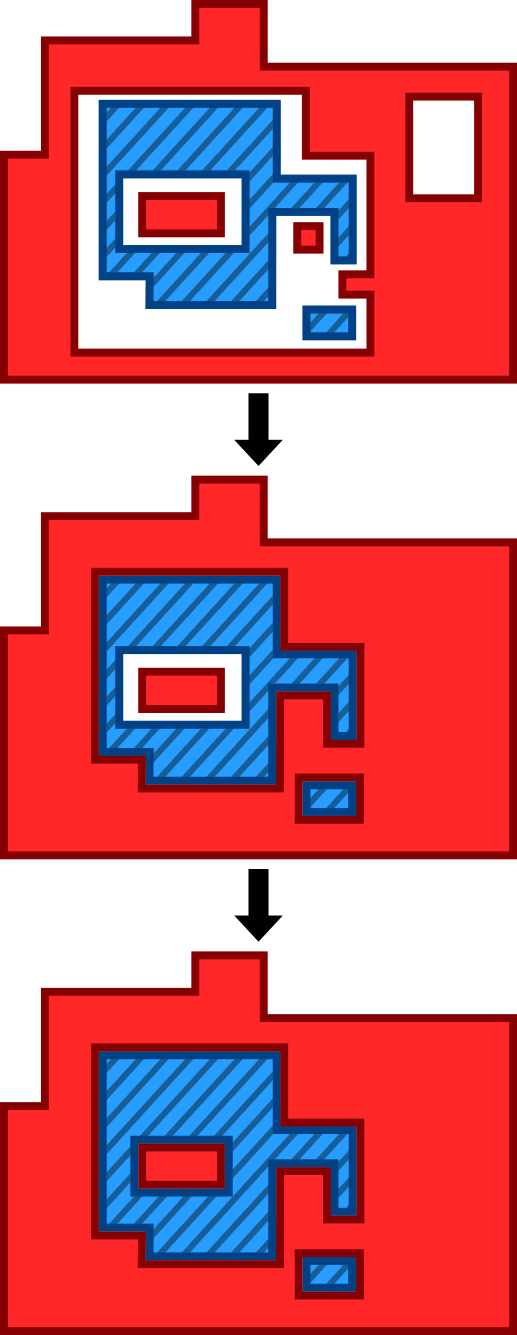}
    \caption{Illustration of the $\FILL$ procedure in the proof of Theorem \ref{thm:independent-vs-normal-cover}, first applied to the holes of the larger red region then to the hole of the larger blue region.}
    \label{fig:one_step}
\end{figure}

\begin{theorem}\label{thm:independent-vs-normal-cover}
The size of the minimum SBCC of $S$ with interior-disjoint boxes is at most $9$ times larger than the minimum SBCC of $S$.
\end{theorem}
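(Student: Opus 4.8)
The plan is to start from an optimal SBCC $Z = Z_R \cup Z_B$ of $S$ (red boxes and blue boxes), and to transform it into an interior-disjoint SBCC while controlling the blow-up factor. The natural first move is to look at the region $\RR = \bigcup_{K \in Z_R} K$ covered by the red boxes and $\BB = \bigcup_{K \in Z_B} K$ covered by the blue boxes. These are unions of $|Z_R|$ and $|Z_B|$ axis-aligned boxes respectively, so by Lemma~\ref{lemma:refl-conv-holes} each has outer-complexity less than $8|Z_R|$ (resp.\ $8|Z_B|$). The obstruction to just invoking Lemma~\ref{lemma:ind-rec-cover} directly is twofold: (i) $\RR$ and $\BB$ may have many holes, which the bound in Lemma~\ref{lemma:ind-rec-cover} charges for, and (ii) $\RR$ and $\BB$ need not be disjoint — their interiors are disjoint by the SBCC condition, but they may share boundary, and more importantly a hole of $\RR$ might contain points of $S$ that are not covered by $\BB$ (they could be covered by another red component). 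So the real work is a cleanup step, the $\FILL$ procedure suggested by Figure~\ref{fig:one_step} and Figure caption, that fills in the holes.

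The key steps, in order, are as follows. First, I would argue that we may assume $\RR$ and $\BB$ partition a bounding box $K$ of $S$ minus a measure-zero set: any region not covered contains no point of $S$, so we can grow boxes to absorb it without violating monochromaticity (this is where I would be careful, since growing a red box toward an uncovered hole is only safe if that hole contains no blue point — and if it did, it would be covered by a blue box, contradicting that it is a hole of $\RR$ \emph{and} uncovered). Second, for each hole $H$ of (a component of) $\RR$, the hole is filled by blue boxes, i.e.\ $H \subseteq \BB$; I would use $\FILL$ to cover $H$ by a \emph{single} extra red box when $H$ contains no blue point of $S$, or otherwise to re-cover the blue material inside $H$ more cheaply. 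The accounting I expect: each original box contributes at most $4$ convex vertices to an outer hull, Lemma~\ref{lemma:ind-rec-cover} turns an outer hull with $v$ vertices and no holes into at most $v/2 - 1 < 4k$ interior-disjoint boxes, and doing this for both colors and for the hole-filling boxes stacks up to the factor $9$. Concretely I would aim to show: after $\FILL$, $\RR$ decomposes into hole-free rectilinear pieces of total outer-complexity $< 8|Z_R|$ coverable by $< 4|Z_R|$ boxes, similarly $< 4|Z_B|$ for blue, plus at most $|Z_R| + |Z_B|$ single boxes for the filled holes — giving $5(|Z_R|+|Z_B|) \le 5\,\opt$, and a more careful split of the $8k$ into convex-vertex contributions that are not double-counted across the two colors should push this down, with $9$ being the clean bound that survives all the slack. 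Interior-disjointness of the final boxes follows because within each color we use the exact cover of Lemma~\ref{lemma:ind-rec-cover} (which produces interior-disjoint boxes), and across colors $\RR$ and $\BB$ already have disjoint interiors.

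The main obstacle I anticipate is the hole-filling bookkeeping: a hole of $\RR$ can itself be a complicated rectilinear region whose blue boxes have their own holes (filled by red), and one must make sure $\FILL$ terminates and that the charging is not circular — i.e.\ that a red box used to fill a blue hole is not later itself re-filled. The right way to handle this is to process holes from the outside in (or to observe that the nesting depth of red-in-blue-in-red alternations is bounded because each level strictly decreases the number of enclosed original boxes), and to charge each filling box to a distinct original box of the opposite color that it "replaces" inside the hole. Making that charging injective, and verifying the resulting boxes are genuinely interior-disjoint after both the Lemma~\ref{lemma:ind-rec-cover} decomposition and the filling, is where the constant $9$ is actually earned; everything else is a direct application of the two lemmas in Section~2.
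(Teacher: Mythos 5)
Your high-level plan --- fill the uncovered holes, bound the total outer-complexity by $8k$ via Lemma~\ref{lemma:refl-conv-holes}, and re-cover each colored region via Lemma~\ref{lemma:ind-rec-cover} --- is indeed the paper's strategy, but the step that actually earns the constant $9$ is missing, and the sketch you give of it does not work. After the uncovered holes are filled, the colored regions are \emph{not} hole-free: a red region can still have holes, namely those occupied by nested blue regions (exactly the situation of Figure~\ref{fig:one_step}), so your claim that $\FILL$ leaves hole-free pieces coverable by fewer than $4|Z_R|+4|Z_B|$ boxes plus one box per filled hole fails. Moreover, an uncovered hole is a rectilinear polygon, not a box, so covering it by ``a single extra red box'' (or growing an existing box into it) would in general overlap boxes of the other color and destroy both monochromaticity of the final cover and interior-disjointness. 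Your arithmetic betrays the gap: from the (false) hole-free picture you derive $5\,\opt$, and then assert that more careful accounting ``pushes this down'' to $9$, which is larger. You do flag the nesting problem in your last paragraph, but the injective charging you propose there (each filling box charged to a distinct original box of the opposite color) is left entirely unspecified, and that bookkeeping \emph{is} the proof.

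The paper handles it differently and without any new boxes during $\FILL$: each bounded uncovered region is simply recolored with the color of the region surrounding it (safe because such regions contain no point of $S$), which increases neither the number of colored regions nor the total outer-complexity. Once no uncovered region remains, every hole of a colored region coincides exactly with the outer-hull of another colored region, and each outer-hull fills at most one such hole. Lemma~\ref{lemma:ind-rec-cover} then covers each region $P$ with at most
$|\oh(P)|/2+\sum_{H\in\holes(P)}(|H|/2+1)-1$
boxes, and charging each term $|H|/2+1$ to the region whose outer-hull is $H$ gives a total of at most $\sum_P(|\oh(P)|+1)\le 8k+k=9k$. This hole-to-outer-hull correspondence (not a charge against original boxes of the opposite color) is where $9=8+1$ comes from, and it is the piece your proposal would still need to supply.
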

\begin{proof}
Let $(\RR, \BB)$ be an SBCC of $S=R\cup B$ of size $k$. By shrinking the boxes in $\RR$ (resp.\ $\BB$), we can assume without loss of generality that every pair of boxes in $\RR$ (resp.\ $\BB$) either intersects at more than one point or is disjoint, and that blue boxes are disjoint from red boxes. The union of all boxes thus determines a partition $\Pi$ of the plane into red rectilinear polygons (red regions), blue rectilinear polygons (blue regions) and uncovered regions (which also have rectilinear boundaries). Call a bounded uncovered region of $\Pi$ a hole of $\Pi$ (notice the distinction between a hole of $\Pi$ and a hole of some region in $\Pi$). Call outer-complexity of $\Pi$ the sum of outer-complexities of all red and blue regions in $\Pi$.

Call $\FILL$ the process of taking a hole $H$ of $\Pi$ and giving it the color of the polygon $K$ that surrounds it. Suppose without loss of generality that $K$ is red. By applying the $\FILL$ procedure we have replaced the region $K$ of $\Pi$ with a new red rectilinear polygon $K'$ with the same outer-hull as $K$. Notice that this increases neither the number of red or blue regions in $\Pi$, and decreases the number of holes by one. Moreover, it does not increase the outer-complexity of $\Pi$, as any blue region or red region which is not adjacent to $H$ is unaffected, while any red region adjacent to $H$ gets merged with $K'$ and thus does not contribute to the outer-complexity of $\Pi$ any longer. The red and blue regions of $\Pi$ are still disjoint and still correctly cover all points of $S$ by construction.

Repeatedly apply $\FILL$ as long as there are holes in $\Pi$. This procedure terminates as $\Pi$ starts out with a finite number of holes and every application of $\FILL$ decreases the number of holes. Moreover, because $\Pi$ no longer has holes by the end of the procedure, every hole of every region in $\Pi$ must coincide with the outer-hull of some other region. On the other hand, the outer-hull of any region can coincide with at most one such hole. Recall that $k$ is the number of boxes we started with in the original SBCC. Because $\Pi$ starts out with at most $k$ colored regions and $\FILL$ never increases the number of colored regions, $\Pi$ still has at most $k$ colored regions.


By Lemma \ref{lemma:ind-rec-cover}, any colored region $P$ of $\Pi$ can be exactly covered with a number of boxes which is at most 
\begin{align*}
    &|P|/2 + |\holes(P)| -1 \\
    &=|\oh(P)|/2+\mysum{H \in \holes(P)}{}{(|H|/2+1)}-1.
\end{align*}

Because every colored region in $\Pi$ coincides with at most one hole of another colored region, and every hole in every colored region coincides with another colored region (more precisely, its outer-hull) by summing over all colored regions $P$ we get a total number of boxes smaller than
\begin{align*}
    &\sum_{P}{|\oh(P)|/2} + \sum_{P}{}{(|\oh(P)|/2+1)}\\
    \leq& \sum_{P}{(|\oh(P)|+1)}.
\end{align*}
Because $\Pi$ contains at most $k$ colored regions, this is at most $c+k$, where $c$ is the outer-complexity of $\Pi$. By Lemma \ref{lemma:refl-conv-holes} and because the applications of $\FILL$ did not increase the outer-complexity of $\Pi$, we have $c\leq 8k$. Thus the claim holds.
\end{proof}

In \cite{Mitchell1993}, Mitchell showed the following:
\begin{theorem}[\cite{Mitchell1993}]\label{thm:independent-cover}
Given a set $S$ of $n$ points and a set of axis-aligned boxes $\RR$, we can, in polynomial time, find a $O(1)$-approximation for the minimum cardinality set of interior-disjoint axis-aligned sub-boxes of $\RR$ which cover all points in $S$.
\end{theorem}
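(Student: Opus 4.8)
The plan is to establish this via Mitchell's \emph{guillotine} technique. I would restrict attention to interior-disjoint box covers that can be assembled by a recursive sequence of axis-parallel cuts (call these \emph{guillotine covers}), show that insisting on this structure costs only a constant factor, and compute an optimal guillotine cover exactly by dynamic programming in polynomial time.

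\emph{Discretization.} Let $\mathcal{L}$ be the arrangement of all axis-parallel lines that support an edge of a box of $\RR$ or pass through a point of $S$, refined by adding one line strictly between each pair of consecutive parallel lines (the same trick used in Section~4); $\mathcal{L}$ has $O(|\RR|+n)$ lines. Since $\mathcal{L}$ refines the boundaries of all boxes of $\RR$, each cell of $\mathcal{L}$ lies wholly inside or wholly outside $\bigcup\RR$. A routine exchange argument then gives an optimal interior-disjoint cover in which every box is a union of cells of $\mathcal{L}$: in an optimal cover each box covers at least one point of $S$, so one may shrink each of its four sides until flush with an extreme covered point, which snaps all corners onto lines of $\mathcal{L}$ while keeping the box a sub-box of $\bigcup\RR$ (shrinking can only keep a set inside $\bigcup\RR$) and preserving its covered point set. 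So from now on the candidate boxes are the $O((|\RR|+n)^4)$ grid-aligned rectangles contained in $\bigcup\RR$, and all cuts will be taken along the added lines, which are free of points of $S$.

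\emph{Dynamic program.} For each grid-aligned sub-rectangle $Q$ with sides on added lines (there are $O((|\RR|+n)^4)$ of them), let $g(Q)$ be the minimum size of a guillotine cover of $S\cap Q$ by interior-disjoint candidate boxes. I would compute $g$ bottom-up: $g(Q)=0$ if $S\cap Q=\emptyset$; $g(Q)=1$ if $Q$ is itself a candidate box; and otherwise $g(Q)=\min_c\bigl(g(Q^c_1)+g(Q^c_2)\bigr)$ over the $O(|\RR|+n)$ cuts $c$ of $Q$, where $Q^c_1,Q^c_2$ are the two pieces. Boxes on opposite sides of a cut are automatically interior-disjoint, so this correctly enumerates guillotine covers; it runs in polynomial time and returns the exact optimum $g(Q_0)$ over guillotine covers, $Q_0$ being the bounding box of $S$. (No infeasibility arises at the leaves, since every point of $S$ lies in the interior of some $R_j\in\RR$, hence in a cell contained in $R_j$.)

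\emph{The guillotine conversion lemma (the crux).} The real content is to show that every interior-disjoint grid-aligned box cover $\mathcal{C}$ of $S$ can be rewritten as a guillotine cover of size $O(|\mathcal{C}|)$; this is precisely Mitchell's $m$-guillotine-style charging argument. Working inside a rectangle $Q$, one shows that some added-line cut of $Q$ crosses, in a sense that can be amortized over the recursion, the interiors of only a few boxes of $\mathcal{C}$; split each crossed box into its two pieces---this preserves interior-disjointness, preserves the set of covered points (the cut misses all points of $S$), and keeps every piece a sub-box of $\bigcup\RR$---discard any piece that has become point-free, and recurse on the two sides. A potential/charging argument then bounds the total number of boxes created over the whole recursion by $O(|\mathcal{C}|)$. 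The points-only nature of the problem only helps here: we never need to tile a region exactly, so empty pieces may simply be thrown away. Applying this lemma to an optimal interior-disjoint cover yields a guillotine cover of size $O(\opt)$, which the dynamic program finds; this is the claimed $O(1)$-approximation. I expect this charging argument to be the only genuine obstacle---the discretization and the dynamic program are routine.
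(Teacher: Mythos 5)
The paper itself offers no proof of this statement: it is imported wholesale from Mitchell's 1993 work, so your reconstruction has to stand on its own. In outline it follows the route one would expect (discretize onto a grid, dynamic programming over guillotine cuts, and a structural lemma saying that restricting to guillotine covers loses only a constant factor), but the step you yourself identify as the crux is precisely the step you do not prove. The statement ``some added-line cut of $Q$ crosses only a few boxes of $\mathcal{C}$'' is false in the literal sense you invoke it: for an interior-disjoint cover consisting of, say, many long horizontal boxes stacked vertically, every vertical cut crosses essentially all of them, so the entire burden falls on the amortized charging scheme, and you give no potential function, no charging rule, and no accounting of how the pieces created across all levels of the recursion are bounded by $O(|\mathcal{C}|)$. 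This is not a routine omission; the constant-factor claim of the whole theorem rests on it. It is in fact true --- any family of $n$ interior-disjoint axis-parallel rectangles admits a guillotine (binary space partition) refinement of size $O(n)$, by the classical results of Paterson--Yao and d'Amore--Franciosa --- but you must either carry out such an argument or cite it; as written, the proposal asserts the theorem's essential content rather than proving it.

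There is also a smaller, fixable defect in the dynamic program. Your subproblems $Q$ have sides on the \emph{added} lines, while your candidate boxes are snapped to lines through points of $S$ and box edges of $\RR$; consequently a leaf rectangle of the recursion (one containing no added line in its interior) is in general not itself a candidate box and need not be contained in $\bigcup\RR$, even though its points are coverable by a single smaller candidate box. With the base case ``$g(Q)=1$ iff $Q$ is a candidate box'' the recursion can reach a nonempty leaf with no feasible value, so the DP as stated is not correct; the base case should be ``$g(Q)=1$ if some candidate box contained in $Q$ covers $S\cap Q$'' (and, more generally, leaves should be allowed to use a box strictly inside $Q$). This repair is easy, but the guillotine conversion lemma remains the genuine gap.
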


Using this result, we get our main theorem in this section:
\begin{theorem}
There is a polynomial-time algorithm to approximate minimum simultaneous class cover within a constant factor.
\end{theorem}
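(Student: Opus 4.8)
The plan is to obtain the $O(1)$-approximation by combining Mitchell's theorem (Theorem~\ref{thm:independent-cover}) with Theorem~\ref{thm:independent-vs-normal-cover} as the bridge. Write $\opt$ for the minimum size of an SBCC of $S$ and $\opt_{\mathrm{ind}}$ for the minimum size of an SBCC of $S$ whose boxes are interior-disjoint. Every interior-disjoint SBCC is in particular an SBCC, so $\opt \le \opt_{\mathrm{ind}}$, and Theorem~\ref{thm:independent-vs-normal-cover} gives $\opt_{\mathrm{ind}} \le 9\opt$. Hence it suffices to produce, in polynomial time, an interior-disjoint SBCC of $S$ of size $O(1)\cdot\opt_{\mathrm{ind}}$: such a family is a feasible SBCC and its size is $O(1)\cdot\opt_{\mathrm{ind}} \le 9\cdot O(1)\cdot\opt = O(1)\cdot\opt$.

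To apply Theorem~\ref{thm:independent-cover}, I would fix a bounded axis-aligned box $K_0$ containing $S$ in its interior (so that, after an easy clipping argument, all boxes under consideration may be assumed to lie inside $K_0$), and let $\RR$ be the set of all maximal axis-aligned sub-boxes of $K_0$ that contain no blue point of $S$ in their interior, together with all maximal axis-aligned sub-boxes of $K_0$ that contain no red point of $S$ in their interior. There are only $O(n^2)$ maximal empty axis-aligned boxes with respect to a given set of $n$ obstacle points, and they can be listed in polynomial time, so $|\RR|$ is polynomial. Mitchell's algorithm then returns in polynomial time an $O(1)$-approximation to the minimum number of interior-disjoint sub-boxes of $\RR$ that cover all points of $S$.

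The crux is to check that this last minimum equals $\opt_{\mathrm{ind}}$. For one inequality, start from an optimal interior-disjoint SBCC: each red box contains no blue point in its interior, hence is a sub-box of some maximal blue-point-free box in $\RR$, and symmetrically each blue box is a sub-box of a box in $\RR$; the resulting family is interior-disjoint and covers $S$, so it is feasible for Mitchell's problem and $\opt_{\mathrm{ind}}$ is an upper bound. For the other inequality, take any interior-disjoint family of sub-boxes of $\RR$ covering $S$ and delete every box containing no point of $S$. Each surviving box $K$ is a sub-box of some $Q \in \RR$; since the interior operator is monotone under inclusion, $\mathrm{int}(K) \subseteq \mathrm{int}(Q)$, so $K$ inherits the one-sided emptiness of $Q$ and is therefore monochromatic. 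Moreover a red point of $S$ can lie only in a sub-box of a blue-point-free box of $\RR$, hence only in a red box of the family, and symmetrically for blue points, so all points are covered by boxes of the correct colour. Interior-disjointness gives the required non-crossing condition between red and blue boxes, so the family is an interior-disjoint SBCC and $\opt_{\mathrm{ind}}$ is also a lower bound.

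Combining the three facts, Mitchell's output is an interior-disjoint SBCC of $S$ of size $O(1)\cdot\opt_{\mathrm{ind}} = O(1)\cdot\opt$, computed in polynomial time, which is exactly what is claimed. I expect the only real work to be the bookkeeping in the previous paragraph: pinning down the notion of a maximal box that is free of points of one colour, handling points lying on box boundaries, justifying the clipping to $K_0$, and confirming that $\RR$ can be enumerated in polynomial time; none of these should present a genuine conceptual obstacle.
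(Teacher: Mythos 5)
Your proposal is correct and follows essentially the same route as the paper: apply Mitchell's result (Theorem~\ref{thm:independent-cover}) to a polynomial-size canonical family of monochromatic boxes and then invoke the factor-$9$ bound of Theorem~\ref{thm:independent-vs-normal-cover} to pass from $\opt_{\mathrm{ind}}$ to $\opt$. The only difference is cosmetic: you instantiate the canonical family as the maximal boxes empty of the opposite colour and verify the equivalence with $\opt_{\mathrm{ind}}$ explicitly, whereas the paper uses monochromatic boxes shrunk to have points of $S$ near every edge and leaves that verification implicit.
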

\begin{proof}
Consider some set of $n$ red and blue points $S$. Call $\opt$ the minimum  size of a SBCC of $S$ and $\opt_\mathrm{ind}$ the minimum size of a SBCC of $S$ with interior-disjoint boxes. 

Let $\RR$ be the set of monochromatic boxes on $S$. We shrink these boxes appropriately to consider only boxes which have a point of $S$ near every edge, so that the number of boxes in $\RR$ is polynomial in $n$ (and $\RR$ can also be computed in $O(\mathrm{poly}(n))$ time). Using Theorem \ref{thm:independent-cover} with this set $\RR$, we can get a SBCC with interior-disjoint boxes (which is also a SBCC by definition) of size $k \in O(\opt_\mathrm{ind})$ in $O(\mathrm{poly}(n))$ time.
By Theorem \ref{thm:class-cover-reduction} we have $\opt_\mathrm{ind} \leq 9\cdot \opt$. Thus $k \in O(\opt)$ and the claim holds.
\end{proof}

\section{Conclusion}
In this paper we have given an alternative proof of the $\APX$-hardness of the Boxes Class Cover problem which yields a more precise statement on the lower bound for a polynomial-time approximation factor (under the assumption that $\P\neq \NP$), and works as well for the case where half-strips are used instead of boxes. We have also explored the related Simultaneous Boxes Class Cover problem, giving proofs for a lower and upper bound which match up to a constant factor. These exhibit interesting connections with the Independent Sub-Box Cover problem studied by Mitchell \cite{Mitchell1993}. The problem of finding a constant-factor approximation to the original BCC problem remains open. Perhaps the methods used here could help find better upper bounds for this problem, or match the existing upper-bounds by more elementary means (as current upper bounds rely on the computation of weighted $\epsilon$-nets).


\small
\bibliographystyle{abbrv}
\bibliography{main}

\begin{thebibliography}{10}

\bibitem{Aschner2013}
R.~Aschner, M.~J. Katz, G.~Morgenstern, and Y.~Yuditsky.
\newblock Approximation schemes for covering and packing.
\newblock In S.~K. Ghosh and T.~Tokuyama, editors, {\em WALCOM: Algorithms and
  Computation}, pages 89--100, Berlin, Heidelberg, 2013. Springer Berlin
  Heidelberg.

\bibitem{Bereg2012}
S.~Bereg, S.~Cabello, J.~M. D{\'{\i}}az{-}B{\'{a}}{\~{n}}ez,
  P.~P{\'{e}}rez{-}Lantero, C.~Seara, and I.~Ventura.
\newblock The class cover problem with boxes.
\newblock {\em Comput. Geom.}, 45(7):294--304, 2012.

\bibitem{Berman1997}
P.~Berman and B.~DasGupta.
\newblock Complexities of efficient solutions of rectilinear polygon cover
  problems.
\newblock {\em Algorithmica}, 17(4):331--356, 1997.

\bibitem{Cannon2004}
A.~Cannon and L.~Cowen.
\newblock Approximation algorithms for the class cover problem.
\newblock {\em Ann. Math. Artif. Intell.}, 40(3-4):215--224, 2004.

\bibitem{Chlebik2003}
M.~Chleb{\'{\i}}k and J.~Chleb{\'{\i}}kov{\'{a}}.
\newblock Inapproximability results for bounded variants of optimization
  problems.
\newblock In A.~Lingas and B.~J. Nilsson, editors, {\em Fundamentals of
  Computation Theory, 14th International Symposium, {FCT} 2003, Malm{\"{o}},
  Sweden}, volume 2751 of {\em Lecture Notes in Computer Science}, pages
  27--38. Springer, 2003.

\bibitem{Cowen1997}
L.~J. Cowen and C.~E. Priebe.
\newblock Randomized nonlinear projections uncover high-dimensional structure.
\newblock {\em Advances in Applied Mathematics}, 19(3):319--331, 1997.

\bibitem{Devinney2003}
J.~G. DeVinney.
\newblock {\em The class cover problem and its application in pattern
  recognition}.
\newblock PhD thesis, Johns Hopkins University, 2003.

\bibitem{Eppstein2009}
D.~Eppstein.
\newblock Graph-theoretic solutions to computational geometry problems.
\newblock In C.~Paul and M.~Habib, editors, {\em Graph-Theoretic Concepts in
  Computer Science, 35th International Workshop, {WG} 2009, Montpellier,
  France}, volume 5911, pages 1--16, 2009.

\bibitem{Shanjani2020}
S.~Hajiaghaei~Shanjani.
\newblock Hardness of approximation for red-blue covering.
\newblock In {\em Proceedings of the 32nd Canadian Conference on Computational
  Geometry, {CCCG} 2020, August 5-7, 2020, University of Saskatchewan,
  Saskatoon, Saskatchewan, Canada}, 2020.

\bibitem{Mitchell1993}
J.~Mitchell.
\newblock Approximation algorithms for geometric separation problems.
\newblock Technical report, Dept. of Applied Math. and Statistics, State
  University of New York at Stony Brook, 1993.

\bibitem{ORourke1983}
J.~O{\textquotesingle}Rourke.
\newblock An alternate proof of the rectilinear art gallery theorem.
\newblock {\em Journal of Geometry}, 21(1):118--130, Dec. 1983.

\bibitem{Priebe2003}
C.~E. Priebe, D.~J. Marchette, J.~DeVinney, and D.~A. Socolinsky.
\newblock Classification using class cover catch digraphs.
\newblock {\em J. Classif.}, 20(1):003--023, 2003.

\end{thebibliography}




\end{document}